\newcommand{\dme}[1]{{\DMedit{#1}}}
\newcommand{\mfe}[1]{{\MFedit{#1}}}
\newcommand{\sme}[1]{{\SMedit{#1}}}
\newcommand{\sig}[1][]{#1{s}}
\newcommand{\val}[1][]{#1{v}}
\newcommand{\alloc}[1][]{#1{x}}
\newcommand{\price}[1][]{#1{p}}
\newcommand{\sigs}[1][]{\mathbf{\sig[#1]}}
\newcommand{\sigspace}{\mathbf{S}}
\newcommand{\vals}[1][]{\mathbf{\val[#1]}}
\newcommand{\allocs}[1][]{\mathbf{\alloc[#1]}}
\newcommand{\prices}[1][]{\mathbf{\price[#1]}}
\newcommand{\low}[3][]{\val[#1]_{#2}(\sigs[#1]_{-#3},0_{#3})}
\newcommand{\remove}[1]{}
\newtheorem{definition}{Definition}[section]
\newtheorem{lemma}{Lemma}[section]
\newtheorem{theorem}{Theorem}[section]
\newtheorem{proposition}{Proposition}[section]
\newcommand{\indep}{\mathcal{I}}
\newcommand{\cands}{\mathcal{C}}
\title{Private Interdependent Valuations:\\
New Bounds for Single-Item Auctions and Matroids\thanks{The work of A.\ Eden was supported by the Israel Science Foundation (grant No. 533/23). The work of M.\ Feldman, S.\ Mauras, and D.\ Mohan has been partially funded by the European Research Council (ERC) under the European Union's Horizon 2020 research and innovation program (grant agreement No. 866132), by an Amazon Research Award, by the NSF-BSF (grant No. 2020788), and by a grant from TAU Center for AI and Data Science (TAD). This research was partially conducted while some of the authors were visiting the Simons Laufer Mathematical Sciences Institute (formerly MSRI) in Berkeley, California, during the Fall 2023 semester, which was funded by the National Science Foundation (grant No. DMS-1928930) and by the Alfred P. Sloan Foundation (grant G-2021-16778).}}
\author{
		Alon Eden
		\thanks{The Hebrew University; {\tt alon.eden@mail.huji.ac.il}}
		\and
		Michal Feldman
		\thanks{Tel Aviv University; {\tt michal.feldman@cs.tau.ac.il}}
            \and
            Simon Mauras
            \thanks{Tel Aviv University; INRIA; {\tt simon.mauras@inria.fr}}
            \and
            Divyarthi Mohan
            \thanks{Tel Aviv University; {\tt divyarthim@tau.ac.il}}
	}
\begin{document}

\begin{titlepage}

\maketitle

\begin{abstract}

We study auction design within the widely acclaimed model of interdependent values, introduced by Milgrom and Weber [1982]. In this model, every bidder $i$ has a private signal $s_i$ for the item for sale, and a public valuation function $v_i(s_1,\ldots,s_n)$ which maps every vector of private signals (of all bidders) into a real value. A recent line of work established the existence of approximately-optimal mechanisms within this framework, even in the more challenging scenario where each bidder's valuation function $v_i$ is also private. 
This body of work has primarily focused on single-item auctions with two natural classes of valuations: those exhibiting submodularity over signals (SOS) and $d$-critical valuations.

In this work we advance the state of the art on interdependent values with private valuation functions, with respect to both SOS and $d$-critical valuations.
For SOS valuations, we devise a new mechanism that gives an improved approximation bound of $5$ for single-item auctions.
This mechanism employs a novel variant of an ``eating mechanism'', leveraging LP-duality to achieve feasibility with reduced welfare loss. 
For $d$-critical valuations, we broaden the scope of existing results beyond single-item auctions, introducing a mechanism that gives a $(d+1)$-approximation for any environment with matroid feasibility constraints on the set of agents that can be simultaneously served. Notably, this approximation bound is tight, even with respect to single-item auctions.
\end{abstract}

\end{titlepage}


\section{Introduction} \label{sec:intro}

A standard assumption in auction theory literature is that each bidder has a private value for the item, typically assumed to be independent of other bidders' values. However, in many scenarios, a bidder's value may crucially depend on private information possessed by others. This is the case in contexts such as mineral rights auctions, art auctions, and online ad auctions, where bidders lack knowledge of their value for the item a priori, and moreover, their values are interdependent.
 
The seminal work of \citet{MilgromWeber82} and \citet{wilson1969communications}, recognized by the 2020 Nobel prize in Economics \cite{nobel2021considerations}, introduced the interdependent values model to formally study these intricate settings. In this model, every bidder $i$ has a private signal $s_i$, representing her partial information about the item, and a public valuation function $v_i(\cdot)$ that maps the signals of all bidders to $i$'s value for the item, with $v_i(s_1, \ldots, s_n)$ denoting bidder $i$'s value for the item under a signal profile $(s_1, \ldots, s_n)$. For instance, in the context of art auctions, a bidder's signal might encapsulate her perception of the artwork's significance or her personal connection to it, while her valuation might also be shaped by the information and perspectives of other bidders. 
Naturally, this scenario presents a notably more complex challenge than the standard model of independent private values.
\mfe{In particular,} \dme{a long line of economic research has established strong impossibilities for obtaining optimal welfare truthfully, \mfe{except in settings that satisfy a strict property called \emph{single-crossing}}~\cite{maskin1992,DM00,jm01, ausubel1999generalized}.}

{\bf The approximation lens.}
Recent endeavors in EconCS have addressed these challenges by adopting the algorithmic lens of approximation, leading to the construction of truthful mechanisms that obtain approximately optimal welfare or revenue (e.g.,~\cite{RoughgardenTC16, ChawlaFK14, EdenFFG18, EdenFFGK19,AmerTC21,EdenGZ22,LuSZ22,CohenFMT23,gkatzelis2021prior,ChenEW}).
A key result from these studies is that when valuations satisfy submodularity over signals (SOS) --- capturing valuations with diminishing returns --- it is possible to obtain a constant-factor approximation of the optimal welfare through a truthful auction \cite{EdenFFGK19,AmerTC21,LuSZ22}.
A valuation function satisfies SOS if, for any $j$, the effect of an increase in $s_j$ is more significant when the other signals $s_{-j}$ are lower.
This class of valuations captures a variety of natural settings where information (signals) have decreasing marginal returns, including those frequently explored in the literature, such as
art auctions and mineral rights auctions.
Quite remarkably, these results apply to a much broader range of auction settings than single-item auctions. In particular, \cite{EdenFFGK19} provides a $4$-approximation for welfare for general combinatorial auctions under SOS and separable valuations.

{\bf Private valuation functions.} 
A crucial assumption that drives the above results is that while the signals are private information, the valuation functions mapping signals to values are publicly known.
This assumption may be quite controversial in real-world scenarios, as the way by which a bidder assesses the value of an item, considering all available information from others, may remain private to that bidder.
This concern has led to a new line of work studying the interdependent values model under the setting where both the signals and the valuation functions are {\em private}. This environment proves to be 
significantly more challenging than its public valuations counterpart.
For example, even in single-item auctions with valuations satisfying single-crossing (which enables optimal welfare under public valuations), one cannot guarantee better than the trivial $n$-approximation \cite{EdenGZ22}.

The state of the art results for interdependent valuations with private signals and valuation functions are the following: 
(i) A truthful mechanism for single-item auctions with SOS valuations, that gives a $5.55$-approximation with respect to the optimal welfare \cite{EdenGZ22,EdenFGMM23}.
(ii) A truthful mechanism for single-item auctions with valuations satisfying a natural property called $d$-critical, that gives a  (tight) $(d+1)$-approximation with respect to the optimal welfare~\cite{EdenGZ22}. A valuation $v$ is $d$-critical if, for every signal profile $\sigs=(s_1,\ldots,s_n)$, there exist at most $d$ bidders $j$ for whom reducing $s_j$ changes the value of $v(\sigs)$.

Two primary open problems emerge from the current research landscape.
First, for single-item auctions, there exists a $5.55$-approximation, and it is known that one cannot get a better approximation factor than $2$ \citep{EdenFFGK19}. 
Shrinking this gap is a clear open problem.
Second, work on interdependent settings with private valuation functions has focused primarily on single-item auctions (with some extensions to settings with identical items and unit-demand valuations \cite{EdenFGMM23}). 
This is in stark contrast to interdependent settings with public valuation functions, where the approximation results extend to significantly broader settings. Providing approximation results for private interdependent valuations, in settings beyond single-item auctions, stands as a major open problem.

\subsection{Our Results}

In this paper, we address and make progress on both of the aforementioned open problems.
Our first result is improving the state of the art for single-item actions with SOS valuations: 

\vspace{0.1in}
{\bf Theorem 1} (\Cref{thm:SOS-5approx}): For any single-item auction with private interdependent, SOS valuations, there exists a truthful mechanism that gives $5$-approximation with respect to the optimal social welfare.
\vspace{0.1in}

Beyond improving the approximation factor for SOS valuations, a key component of our result is a new \emph{eating mechanism} for this setting, which might be of independent interest.
The mechanism starts with a single unit of allocation probability. 
Ideally, we envision an eating process wherein each bidder $i$ begins to consume at a moment determined by her value $v_i(\sigs)$ (with those having higher valuations starting first). 
This approach, however, does not yield a truthful mechanism. 
To address this, we implement $n$ separate eating processes, one for each bidder. 
The allocation probability for bidder $i$ then depends on bidder $i$'s true value $v_i(\sigs)$, and on the other bidders' \emph{shadow values} $\low{j}{i}$ that don't depend on $i$'s signal.
The challenge is to construct this mechanism in a way that ensures feasibility, namely that the sum of all allocation probabilities does not exceed $1$, \mfe{while still ensuring good approximation guarantees}. 
To this end, we formulate the eating process solution via a Linear Program (LP) and apply LP duality to bound the allocation probabilities.

Interestingly, the RCF mechanism described in \cite{EdenFGMM23} employs a filtering process that determines a bidder's candidacy based on a random discretization and tie-breaking order. 
In a way, our eating mechanism is analogous to the probabilistic serial rule, whereas the RCF mechanism is analogous to the random serial dictatorship rule.

\vspace{0.1in}

Our second result concerns settings with private interdependent, $d$-critical valuations.
For this valuation class, we extend the prior results beyond single-item auctions, to any auction setting where the set of bidders that can be simultaneously served is subject to matroid feasibility constraints.

\vspace{0.1in}
{\bf Theorem 2} (\Cref{thm:d-critical-matroid}): For any auction setting with matroid feasibility constraints (on the set of bidders that can be simultaneously served) and private interdependent, $d$-critical valuations, there exists a truthful mechanism that gives the tight $(d+1)$-approximation with respect to the optimal social welfare.
\vspace{0.1in}

Interestingly, the approximation ratio of $(d+1)$ is proven to be tight, even for the single-item setting~\cite{EdenGZ22}. In order to extend the result for \emph{any} matroid feasibility constraint, we carefully generalize the mechanism for single-item auction outlined in \cite{EdenGZ22}. 

The high-level intuition for the single-item auction is that a bidder $i$ qualifies as a candidate for allocation if and only if her value $v_i(\sigs)$ exceeds the shadow values $\low{j}{i}$ of all other bidders $j \neq i$. Notably, the bidder $i^\star$ with the true highest value always qualifies as a candidate, with the only additional potential candidates being the $\le d$ bidders whose signals can reduce $v_{i^\star}(\sigs)$. 

Building upon this intuition, we introduce a candidate filtering algorithm that designates bidder $i$ as a candidate if and only if she belongs to the maximal independent set, when using the shadow values of others. A key challenge here is to show that all bidders in the true optimal set $I^\star$ are selected as candidates, and the only other potential candidates are those who can reduce the value of the bidders in $I^\star$.
We then need to address the challenge that the set of candidates may not be feasible. To overcome this issue, we employ the matroid partition theorem of \citet{Edmonds71}, showing that the set of candidates can be partitioned into at most $(d+1)$ independent sets, allowing us to provide an ex-post feasible allocation by selecting each of these independent sets with probability $1/(d+1)$.

We also note that, this mechanism can be adjusted to scenarios where every agent $i$ has a $d_i$-critical valuation, with an unknown $d_i$. In this case, the mechanism can be extended to obtain $2(\max_i(d_i)+1)$-approximation.

\mfe{A natural open problem arising from our work is whether there exists a truthful mechanism for private interdependent, SOS valuations with matroid feasibility constraints.
We believe that our new eating mechanism may be a promising direction for resolving this problem, as it sidesteps the complexities associated with random ordering (as was required in the previous approach taken by \citet{EdenFGMM23}), which is hard to analyze in matroid settings, as evident by the notoriously hard matroid secretary problem \cite{BabaioffIKK18}.}

\subsection{Related Work}
The two most closely related works are \citet{EdenGZ22}, who are the first to study interdependent values with private valuation functions and provide an $O(\log^2 n)$-approximation to the optimal welfare for single-item auctions under SOS valuations, and \citet{EdenFGMM23}, who improve this to a $5.5$-approximation and extend the result to multi-unit auctions. In addition, these papers also establish a $(d+1)$-approximation for \mfe{single-item auctions with} $d$-critical valuations. A more general class of valuations called $d$-self-bounding is considered in \cite{EdenFGMM23}, for which  an $O(d)$-approximation is provided. \citet{DM00} study a related setting where the valuation function is unknown to the seller but the bidders \mfe{are aware of each others'} valuation functions. They design a mechanism where the bidders report a complicated \dme{contingent} bidding function, and show that under single-crossing type conditions there is a fully efficient equilibrium.

The interdependent model with public valuations has attracted extensive work in economics for decades, establishing strong impossibilities for obtaining optimal welfare truthfully unless the valuations satisfy some strong condition such as single-crossing \cite{maskin1992,DM00,jm01, ausubel1999generalized}. 
\mfe{Many additional studies have explored the interdependent values model in a variety of scenarios, often incorporating single-crossing type assumptions (e.g.,~\cite{ChungE02,ItoP06,RobuPIJ13,CKK15}).}

\mfe{Recently, there has been a surge of research within EconCS focusing on the interdependent values model from an algorithmic perspective.} \citet{RoughgardenTC16} study prior-independent mechanisms that are approximately optimal in the revenue maximization regime for single parameter settings with downward closed constraints and further provide conditions under which Myerson-like virtual welfare maximization obtains optimal revenue. \citet{Li13} provides simple VCG-based mechanisms that obtain near-optimal revenue under MHR distributions and matroid feasibility constraints. \citet{ChawlaFK14} provide approximation guarantees for revenue in single-parameter settings with matroid feasibility constraints under relaxed assumptions. These results assume some form of single-crossing type condition in order to obtain positive results. 

\citet{EdenFFG18} provide approximation guarantees for welfare in prior-free single-item auction settings when the valuation functions are \emph{approximately} single-crossing. \citet{EdenFFGK19} were the first to study public SOS valuations in interdependent setting and provide a $4$-approximation for any single-parameter setting with downward closed feasibility constraints (without any single-crossing type assumption). They also study combinatorial auctions and provide a $4$-approximation under an additional condition of seperable valuations. \citet{AmerTC21} and \citet{LuSZ22} provide improved approximation guarantees for the single-item setting. \citet{EdenFTZ21} study price of anarchy bounds for simple non-truthful mechanisms, and \citet{gkatzelis2021prior} show that clock auctions can obtain approximately optimal welfare under certain assumptions on valuation functions and finite signal space. \citet{CohenFMT23} study approximation guarantees for the public projects setting with interdependent values, and \citet{BirmpasELR23} study a fair division problem with interdependence.

\section{Preliminaries} \label{sec:prelims}

\subsection{Interdependent Valuations and Truthful Mechanisms} \label{sec:prelim-model}

We consider an auction with $n$ bidders with interdependent valuations. Every bidder $i \in [n]$ has a private signal $\sig_i \in S_i$, capturing bidder $i$'s private information about the item, where $S_i$ denotes the signal space of bidder $i$. 
Wlog assume that the minimum signal in $S_i$ is $0$, for every $i$.
We denote by $\sigspace = S_1\times \ldots \times S_n$ the joint signal space of the bidders, and by $\sigs = (\sig_1, \ldots, \sig_n) \in \sigspace$ a signal profile.
As is standard, we denote by $\sigs_{-i}=(\sig_1, \ldots, \sig_{i-1},\sig_{i+1},\ldots, \sig_n)$ the signal profile of all bidders other than bidder $i$.

In addition \mfe{to the signal}, every bidder $i$ has a private \dme{monotone} valuation function $\val_i: \sigspace\rightarrow \mathbb R_+$, which maps \mfe{every signal profile $\sigs = (\sig_1, \ldots, \sig_n)$} into a non-negative real number, which is bidder $i$'s value for the item.
We denote by $V_i \subseteq \mathbb R_+^{\sigspace}$ the valuation space of bidder $i$, and by $\mathbf{V} = V_1\times\ldots\times V_n$ the joint valuation space of all bidders. 
A vector $\vals = (\val_1, \ldots, \val_n) \in \mathbf{V}$ denotes a valuation profile. 

We refer to bidder $i$'s value under signal profile $\sigs$ as $i$'s real value. 
We sometimes refer to a low estimate of bidder $i$'s value, where bidder $j$'s signal is zeroed out; i.e., $v_i(\sigs_{-j},0_j)$; we refer to this value as bidder $i$'s shadow value.

\mfe{A mechanism is defined by a pair $(\allocs, \prices)$ denoting the mechanism's allocation and payment rules, based on the bidders' reports of their signals and valuations. 
The allocation rule $\allocs: \sigspace\times\mathbf{V}\rightarrow [0,1]^n$ returns the allocation probability $\alloc_i(\sigs,\vals)$ for every bidder $i$, and the payment rule $\prices: \sigspace\times\mathbf{V}\rightarrow 
\mathbb{R}_+^n$ returns the payment $\price_i(\sigs,\vals)$ every bidder $i$ for reported signals $\sigs,\vals$.}

Unless specified otherwise, we access bidder valuations via value queries; namely, given a signal profile $\sigs$, bidder $i$'s value oracle $\val_i$ returns $\val_i(\sigs)$. 
A mechanism is said to be polynomial if it makes a polynomial number of value queries. 

A mechanism $(\allocs,\prices)$ is said to be {\em truthful} if it is an ex-post Nash equilibrium for the bidders to truthfully report their private information (signals and valuations).
In our query access model, truthfulness means that it is in every bidder's best interest to answer every query truthfully, given that other bidders do the same.

\begin{definition}[EPIC-IR]
A mechanism $(\mathbf{x}, \mathbf{p})$ is 
{\em ex-post incentive compatible (IC)} if for every $i\in[n],\sigs\in \sigspace,  \vals \in \mathbf{V},  \sig_i'\in S_i, \val_i'\in V_i$
\begin{equation}
x_{i}(\sigs, \vals) \cdot v_{i}(\sigs) - p_i(\sigs, \vals) \geq x_{i}(\sigs_{-i},\sig_i', \vals_{-i}, \val'_i) \cdot v_{i}(\sigs) - p_i(\sigs_{-i},\sig_i', \vals_{-i}, \val'_i).
\label{eq:IC}
\end{equation}
It is {\em ex-post individually rational (IR)} if for every $i\in[n]$, $\sigs\in\sigspace$, and $\vals\in\mathbf{V}$
\begin{equation}
x_{i}(\sigs, \vals) \cdot v_{i}(\sigs) - p_i(\sigs, \vals) \geq 0
\label{eq:IR}
\end{equation}
It is EPIC-IR if it is both ex-post IC and ex-post IR.
An allocation $\allocs$ is {\em EPIC-IR} implementable if there exists a payment rule $\prices$ such that the pair $(\allocs,\prices)$ is EPIC-IR. 
\end{definition}

{It is well known that even when the valuation functions are public, this is the strongest possible solution concept when dealing with interdependent valuations.\footnote{Dominant strategy incentive-compatibility does not make sense, as a bidder $i$ might not be willing to win if other bidders over-bid, which causes the winner to over-pay and incur a negative utility.}}

\citet{EdenGZ22} give a sufficient condition for an allocation rule $\allocs$ to be EPIC-IR implementable.

\begin{proposition}[\citet{EdenGZ22}]\label{lem:pricing}
An allocation rule $\allocs$ is EPIC-IR implementable if for every bidder $i$, $\alloc_{i}$ depends only on $\sigs_{-i}, \vals_{-i}$ and $\val_{i}(\sigs)$, and is non-decreasing in $\val_{i}(\sigs)$. 

For an (EPIC-IR) implementable $\allocs$, the corresponding payment rule $\prices$ is given by:
\begin{equation}
p_{i}(\sigs, \vals) := x_i(\sigs_{-i}, \vals_{-i}, \val_i(\sigs))\cdot \val_i(\sigs) - \int_0^{\val_i(\sigs)} x_i(\sigs_{-i}, \vals_{-i}, t) \, \mathrm dt.
\label{eq:pricing}
\end{equation}
\end{proposition}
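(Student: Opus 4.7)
The plan is to reduce EPIC-IR implementability to the classical single-parameter Myerson argument. Fix bidder $i$ and the others' reports $(\sigs_{-i}, \vals_{-i})$. By hypothesis, $\alloc_i$ depends on bidder $i$'s report only through the scalar $\val_i(\sigs)$, so any deviation $(\sig_i', \val_i')$ induces allocation $\tilde{\alloc}(b)$ and payment $\tilde{\price}(b)$, where $b := \val_i'(\sigs_{-i}, \sig_i')$ is the ``effective reported value'' and $\tilde{\alloc}, \tilde{\price}$ denote $\alloc_i, \price_i$ viewed as functions of this scalar (with $\sigs_{-i}, \vals_{-i}$ fixed). Setting $a := \val_i(\sigs)$ for bidder $i$'s true value, her utility from such a deviation is $u(a, b) := \tilde{\alloc}(b) \cdot a - \tilde{\price}(b)$, and proving EPIC reduces to showing $u(a, a) \ge u(a, b)$ for all $a, b \ge 0$.

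Substituting the proposed payment rule $\tilde{\price}(b) = \tilde{\alloc}(b) \cdot b - \int_0^b \tilde{\alloc}(t)\, \mathrm{d}t$ yields
\begin{equation*}
u(a, a) - u(a, b) \;=\; \int_b^a \tilde{\alloc}(t)\, \mathrm{d}t - \tilde{\alloc}(b)(a - b).
\end{equation*}
When $a \ge b$, this equals $\int_b^a [\tilde{\alloc}(t) - \tilde{\alloc}(b)]\, \mathrm{d}t \ge 0$ by monotonicity of $\tilde{\alloc}$; when $a < b$, it equals $\int_a^b [\tilde{\alloc}(b) - \tilde{\alloc}(t)]\, \mathrm{d}t \ge 0$ by the same monotonicity. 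Hence EPIC holds. IR is immediate, since under truthful reporting $u(a, a) = \int_0^a \tilde{\alloc}(t)\, \mathrm{d}t \ge 0$ because $\tilde{\alloc}$ is non-negative.

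There is no substantive obstacle here: the entire content is the observation that the hypothesis collapses bidder $i$'s two-dimensional private report $(\sig_i, \val_i)$ into a single scalar insofar as her allocation is concerned, after which Myerson's classical single-parameter argument applies verbatim. The one point worth flagging is that a well-chosen joint deviation in signal and valuation can in principle induce any non-negative effective value $b$, so the Myerson-style inequality must be verified for all $b \ge 0$ rather than only for some restricted range; the argument above does exactly that.
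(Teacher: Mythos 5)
Your proof is correct. The paper itself does not prove this proposition---it is imported from \citet{EdenGZ22}---and your argument is exactly the standard one used there: the hypothesis collapses bidder $i$'s report $(\sig_i,\val_i)$ to the scalar $b=\val_i'(\sigs_{-i},\sig_i')$, after which the Myerson payment identity and monotonicity of $\tilde{\alloc}$ give EPIC, and non-negativity of $\tilde{\alloc}$ gives IR. Your remark that the deviation can realize an arbitrary effective value $b\ge 0$, so the inequality must hold for all such $b$, is precisely the right point to flag.
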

That is, bidder $i$'s allocation may depend on all other bidders' signals and valuation functions, and it can only depend on bidder $i$'s signal $\sig_i$ or valuation function $v_{i}$ through the numerical value $v_i(\sigs)$. {\citet{EdenGZ22} show that this condition is almost necessary in order to be EPIC-IR implementable.\footnote{{The necessary conditions for EPIC-IR implementablity are (i) $x_i$ is monotone in $v_i(\sigs)$, and (ii) for a given $\sigs_{-i}$, the set of signals $s_i,s'_i$ and valuation functions $v_i,v'_i$ such that $v_i(s_i,\sigs_{-i}) = v'_i(s'_i,\sigs_{-i})$ and $x_i(v_i,s_i,\sigs_{-i}) \neq x_i(v'_i,s'_i,\sigs_{-i})$ has measure $0$.}}}

\subsection{Valuations Classes}
We focus on two  classes of valuations which are well-studied in the context of interdependent values with private valuation functions, \mfe{namely SOS valuations and $d$-critical valuations}.

\begin{definition}[SOS Valuations]
    A valuations function $v:\sigspace\rightarrow \mathbb R_+$ is Submodular over signals (or SOS) if for  every $i$, $s_i\in S_i$, $\sigs_{-i}, \hat{\sigs}_{-i}\in \sigspace_{-i}$ such that $\sigs_{-i}\preceq \hat{\sigs}_{-i}$, and $\delta>0$,
    \begin{eqnarray}
        v(s_i+\delta,\sigs_{-i})- v(s_i,\sigs_{-i}) \ge v(s_i+\delta,\hat{\sigs}_{-i})- v(s_i,\hat{\sigs}_{-i}).
    \end{eqnarray}
\end{definition}

The SOS class includes most valuation classes studied in the literature on interdependent values, including the resale model (i.e., affine functions)~\dme{\cite{myerson1981optimal,Klemperer98}}, the mineral rights model~\dme{\cite{wilson1969communications}}, average value, etc.

Previous work \cite{EdenGZ22,LuSZ22,EdenFGMM23} has shown that SOS valuations satisfy a useful property called self-bounding, where $\sum_{i=1}^n (\val(\sigs) - \low{}{i}) \leq \val(\sigs)$.
This is cast in the following lemma.

\begin{lemma}[SOS functions are self-bounding
\cite{EdenGZ22,LuSZ22}]
\label{lem:self-bounding}
For every monotone SOS valuation function $v: \sigspace\rightarrow \mathbb R_+$, and every signal profile $\sigs$, we have
$$
\sum_{i=1}^n (\val(\sigs) - \low{}{i}) \leq \val(\sigs).
$$
\end{lemma}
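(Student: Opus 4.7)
The plan is to apply the standard telescoping argument used to show that monotone submodular set functions are self-bounding, adapted to the SOS (submodularity over signals) setting. The key observation is that each term $v(\sigs) - \low{}{i}$ is a single-coordinate marginal of $v$ at a ``fully loaded'' base profile $\sigs_{-i}$, while the SOS property tells us that moving $s_{-i}$ downwards can only increase such marginals. This suggests swapping each term for the same marginal taken at a smaller base, chosen so that the resulting marginals telescope.

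Concretely, I would fix any ordering of the bidders (say $1, 2, \ldots, n$) and define intermediate profiles $\sigs^{(0)} = \sigs$ and $\sigs^{(i)} = (0_1, \ldots, 0_i, s_{i+1}, \ldots, s_n)$ for $i \in [n]$. By construction, the sum
\begin{equation*}
\sum_{i=1}^n \bigl(v(\sigs^{(i-1)}) - v(\sigs^{(i)})\bigr) = v(\sigs) - v(\mathbf{0})
\end{equation*}
telescopes, and since $v$ is non-negative this is at most $v(\sigs)$. Each summand $v(\sigs^{(i-1)}) - v(\sigs^{(i)})$ is precisely the marginal of coordinate $i$ going from $0$ to $s_i$ when the other coordinates sit at $(0_1, \ldots, 0_{i-1}, s_{i+1}, \ldots, s_n)$, a profile coordinate-wise dominated by $\sigs_{-i}$.

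Applying the SOS inequality (with $\delta = s_i$ and starting point $0$, smaller profile $(0_1,\ldots,0_{i-1},s_{i+1},\ldots,s_n)$, larger profile $\sigs_{-i}$) then yields
\begin{equation*}
v(\sigs^{(i-1)}) - v(\sigs^{(i)}) \;\geq\; v(\sigs) - \low{}{i}
\end{equation*}
for every $i$. Summing over $i$ and combining with the telescoping identity gives the desired bound. The only subtlety to get right is the direction of the SOS inequality, namely that marginals of a single signal are \emph{larger} when the other signals are \emph{lower}, which is exactly the definition given in the excerpt; once this is verified, the argument is immediate and does not require any additional hypothesis beyond SOS, monotonicity, and non-negativity of $v$.
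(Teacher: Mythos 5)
Your proof is correct: the telescoping decomposition, the application of the SOS inequality to each marginal (with the smaller base profile $(0_1,\ldots,0_{i-1},s_{i+1},\ldots,s_n) \preceq \sigs_{-i}$), and the use of non-negativity to drop $v(\mathbf{0})$ are exactly the standard argument from the cited works \cite{EdenGZ22,LuSZ22}; the paper itself only states the lemma by reference. The only trivial point worth noting is the edge case $s_i = 0$ (where $\delta = s_i$ is not strictly positive), for which the corresponding term $\val(\sigs) - \low{}{i}$ vanishes and the bound holds vacuously.
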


We now turn to define $d$-critical valuations.

\begin{definition}[$d$-critical Valuations]
    A valuation function $v:\sigspace\rightarrow \mathbb R_+$ is $d$-critical if for every $\sigs\in\sigspace,$ 
    \begin{eqnarray}
        |\{j\ :\ \low{}{j} < v(\sigs)\}| \le d.    
    \end{eqnarray}
\end{definition}

A few natural functions that lie in low levels of the $d$-critical hierarchy are: (i) max signal is $1$-critical, (ii) any weighted matroid function over signals with rank $d$ is $d$-critical,
\sme{(iii)} values derived from a social network with $d$-bounded degrees (where every bidder's value is an arbitrary function of its neighbors' signals) \sme{are $d$-critical}. 

\citet{EdenGZ22} show the following impossibility for (a special case of) $d$-critical valuations.

\begin{proposition}[\citet{EdenGZ22}] \label{prop:d_critical_lb}
    For every EPIC-IR single-item auction for $d$-critical valuations, there exists an instance for which the obtained welfare cannot give better than $(d+1)$-approximation for the optimal welfare.
\end{proposition}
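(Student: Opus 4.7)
The plan is to exhibit a family of $d+1$ instances on $n=d+1$ bidders with $d$-critical valuations, each with optimal welfare $H$, and to show via the implementability characterization of \Cref{lem:pricing} together with a feasibility argument at an auxiliary all-ones profile that any EPIC-IR mechanism's welfares sum to at most $H+(d+1)$ across the family. Pigeonhole then yields one instance with welfare at most $(H+d+1)/(d+1)$, and sending $H\to\infty$ drives the approximation ratio to $d+1$.

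Concretely, take binary signal spaces $S_i=\{0,1\}$ and the symmetric valuations
\[
v_i(\sigs)=1+(H-1)\prod_{j\ne i}s_j.
\]
Each $v_i$ is monotone and $d$-critical: the only profiles at which $v_i>1$ are those with $s_j=1$ for every $j\ne i$, and at such profiles exactly the $d$ signals $\{s_j:j\ne i\}$ are critical while $s_i$ is not. For $k\in[d+1]$, let $I_k$ denote the profile with $s_k=0$ and $s_j=1$ for $j\ne k$, and let $I_{\mathbf{1}}$ be the all-ones profile. Then $v_k(I_k)=H$ and $v_i(I_k)=1$ for $i\ne k$, so $\opt(I_k)=H$, whereas $v_i(I_{\mathbf{1}})=H$ for every $i$.

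The crux of the argument is that at both $I_k$ and $I_{\mathbf{1}}$ the triple $(v_k(\sigs),\sigs_{-k},\vals_{-k})$ takes the identical value $(H,\mathbf{1}_{-k},\vals_{-k})$, so by \Cref{lem:pricing} we have $x_k(I_k)=x_k(I_{\mathbf{1}})$. Feasibility at $I_{\mathbf{1}}$ then gives $\sum_k x_k(I_{\mathbf{1}})\le 1$, hence $\sum_k x_k(I_k)\le 1$. Combined with feasibility at $I_k$, namely $\sum_{i\ne k}x_i(I_k)\le 1$, and $v_i(I_k)=1$ for $i\ne k$, this yields $W(I_k)\le H\cdot x_k(I_k)+1$; summing over $k$ gives $\sum_k W(I_k)\le H+(d+1)$, so pigeonhole produces a $k^\star$ with $W(I_{k^\star})\le (H+d+1)/(d+1)$. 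The approximation ratio on $I_{k^\star}$ is therefore at least $(d+1)H/(H+d+1)$, which tends to $d+1$ as $H\to\infty$.

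The delicate step I anticipate is the IC-based identity $x_k(I_k)=x_k(I_{\mathbf{1}})$: although \Cref{lem:pricing} is phrased only as a sufficient condition for implementability, its near-converse (noted in the footnote immediately after the proposition) is exact in the finite discrete setting considered here, forcing pointwise equality of the two allocations.
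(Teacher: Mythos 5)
The paper imports this proposition from \citet{EdenGZ22} without proof, so I am judging your argument on its own terms. Your construction is sound: each $v_i$ is monotone and $d$-critical, the optimal welfare at $I_k$ is $H$, and the accounting from ``$\sum_k x_k(I_k)\le 1$'' to the $(d+1)H/(H+d+1)$ bound is correct. The problem is exactly the step you flagged, and your resolution of it is wrong. Since $v_k$ does not depend on $s_k$, bidder $k$'s true value is $H$ at both $I_k$ and $I_{\mathbf 1}$, so the two IC constraints of Eq.~\eqref{eq:IC} (deviating from $s_k=0$ to $s_k=1$ and back) force only \emph{equal utility}, $x_k(I_k)\,H-p_k(I_k)=x_k(I_{\mathbf 1})\,H-p_k(I_{\mathbf 1})$, not equal allocation. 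The measure-zero exception in the footnote to \Cref{lem:pricing} is precisely the set of value ties, and your two profiles realize such a tie with certainty rather than on a negligible set; nothing about the instance being finite and discrete makes that exception vacuous. Concretely, give bidder $k$ at $\sigs_{-k}=\mathbf 1_{-k}$ the menu $\{(0,0),\ (\tfrac{1}{d+1},\tfrac{H}{d+1}),\ (1,H)\}$ of (allocation, payment) pairs: it is monotone, all three entries yield utility $0$ at value exactly $H$, and assigning $(1,H)$ when $s_k=0$ and $(\tfrac{1}{d+1},\tfrac{H}{d+1})$ when $s_k=1$ satisfies every IC and IR constraint you invoke, while giving $x_k(I_k)=1$ and hence welfare $H=\opt$ at every $I_k$. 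So $\sum_k x_k(I_k)\le 1$ does not follow from your premises.

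The gap is also not cosmetically patchable within your construction. The natural fix --- perturb so that $v_k(I_{\mathbf 1})>v_k(I_k)$ strictly and invoke the genuinely necessary monotonicity of $x_k$ in $v_k(\sigs)$ to conclude $x_k(I_k)\le x_k(I_{\mathbf 1})$, which is all you need --- is blocked by $d$-criticality: making $v_k$ strictly increasing in $s_k$ at the all-ones profile creates a $(d+1)$-st critical signal there. A correct proof needs an explicit tie-breaking device, for instance embedding the instance in a continuum (say, rescaling each bidder's private valuation function by a parameter $c_i$ near $1$) and arguing that for almost every choice of parameters the maximizer over bidder $k$'s menu at her realized value is unique in its allocation coordinate, so that the exceptional set in the near-converse of \Cref{lem:pricing} is genuinely avoided; alternatively, reproduce the argument of \citet{EdenGZ22} directly.
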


\subsection{Matroid Preliminaries} \label{sec:matroid_prelims}

In Section~\ref{sec:matroids}, we consider auctions where the feasible sets of bidders that can be served are independent sets of  a matroid. \mfe{The definition of a matroid follows.}

\begin{definition}[Matroid]
A matroid is a pair $M=([n],\indep)$, where $[n]$ is a set of elements, and $\indep$ is a (non-empty) collection of subsets of $[n]$, which are the independent sets of the matroid, satisfying: 
\begin{itemize}
    \item Downward closed: If $I \in \indep$ then $A\in \indep$ for all subsets $A\subseteq I$.
    \item Exchange property: If $A,B \in \indep$ with $|A| < |B|$ then there exists $i\in B\setminus A$ such that $A\cup\{i\} \in \indep$.
\end{itemize}
\end{definition}

\begin{definition}[Rank Function]
    For a subset of elements $S\subseteq [n]$, let $rank(S)=\max_{I\subseteq S\ : \ I\in \indep} |I|$. $rank(M)=rank([n])$ is referred to as the matroid's \textit{rank}.

    The rank function is submodular \cite{Oxley2011}. That is, for every $S\subseteq T\subseteq [n]$, and for every $i\in [n]$, $rank(S\cup \{i\})-rank(S)\ge rank(T\cup \{i\})-rank(T).$
\end{definition}

\mfe{{\bf Greedy algorithm.} Consider a matroid with weighted elements, where the weight function $w:[n]\rightarrow\mathbb{R}^+$ assigns a non-negative weight to every element $i \in [n]$. The greedy algorithm with weights $w$ operates by sorting the elements in $[n]$ in non-increasing order of $w$ and then sequentially adds elements to maintain an independent set. Specifically, the algorithm starts with an empty set $I^\star$ and, for each element $i$ in the specified order, adds $i$ to $I^\star$ if $I^\star\cup\{i\} \in \indep$.
It is well known that the greedy algorithm produces an independent set $I^\star$ with maximum weight. This is cast in the following theorem.}

\begin{theorem}[Greedy is Optimal (\citet{Edmonds71})] \label{thm:greedy}
    \mfe{Let $I^\star$ be the set returned by the greedy algorithm on a matroid $M=([n],\indep)$. It holds that}
    \begin{enumerate}
        \item $|I^\star|=rank(M)$.
        \item $\sum_{i\in I^\star}w(i) = \max_{I\in \indep}\sum_{i\in I} w(i)$.
    \end{enumerate}
\end{theorem}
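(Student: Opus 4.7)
The plan is to prove the two claims in sequence, both by contradiction, leveraging the exchange property of matroids together with the fact that greedy considers every element in non-increasing order of weight.

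For part (1), I would assume for contradiction that $|I^\star| < rank(M)$. Pick any $B \in \indep$ with $|B| = rank(M)$. By the exchange property applied to $I^\star$ and $B$, there exists $j \in B \setminus I^\star$ such that $I^\star \cup \{j\} \in \indep$. Now consider the moment during the greedy run when element $j$ was examined: at that time, the running set $I$ was a subset of $I^\star$, and by downward closure of $\indep$ applied to $I^\star \cup \{j\}$, the set $I \cup \{j\}$ is also independent. Therefore greedy would have added $j$, meaning $j \in I^\star$, contradicting $j \in B \setminus I^\star$.

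For part (2), let $I^\star = \{a_1, \ldots, a_k\}$ be ordered in the sequence greedy chose them, so $w(a_1) \geq \cdots \geq w(a_k)$, and let $B = \{b_1, \ldots, b_m\} \in \indep$ be any competitor, also sorted in non-increasing weight order. I would prove by contradiction that $w(a_i) \geq w(b_i)$ for every $i \leq m$. Suppose $i$ is the smallest index where $w(b_i) > w(a_i)$, and set $A_{i-1} = \{a_1,\ldots,a_{i-1}\}$, $B_i = \{b_1,\ldots,b_i\}$. Since $|B_i| > |A_{i-1}|$, the exchange property produces some $b \in B_i \setminus A_{i-1}$ with $A_{i-1} \cup \{b\} \in \indep$, and $w(b) \geq w(b_i) > w(a_i)$. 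Because $b$ has strictly larger weight than $a_i$, greedy considered $b$ before $a_i$, at which point the running set was contained in $A_{i-1}$, so by downward closure the greedy should have added $b$ to $I^\star$. Then $b = a_j$ for some $j$, but $b \notin A_{i-1}$ forces $j \geq i$, giving $w(b) \leq w(a_i)$, a contradiction. Combined with part (1), which gives $m \leq |I^\star| = k$, and the non-negativity of $w$, summing yields $\sum_i w(a_i) \geq \sum_i w(b_i)$, proving optimality of $I^\star$.

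The main subtlety lies in the exchange argument of part (2): one has to choose the right pair of sets ($A_{i-1}$ of size $i-1$ versus $B_i$ of size $i$) so that the exchanged element $b$ is guaranteed to have weight above the $w(a_i)$ threshold, and then argue that \emph{all} subsets of $A_{i-1}$ are independent at the moment greedy encountered $b$. Apart from this, the proof is entirely a matter of carefully translating matroid axioms into statements about the greedy trajectory, and no further machinery is required.
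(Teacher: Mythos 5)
The paper states this theorem as a classical result of Edmonds (1971) and gives no proof of its own, so there is nothing internal to compare against. Your argument is the standard (and correct) exchange-property proof: part (1) via augmenting $I^\star$ from a basis and noting greedy would have accepted the augmenting element, and part (2) via the element-wise domination $w(a_i)\ge w(b_i)$ obtained by exchanging into the prefix $A_{i-1}$. All the steps check out, including the key observation that when $b$ is examined the running set is contained in $A_{i-1}$ (every element added earlier has weight at least $w(b)>w(a_i)$ and hence index below $i$), and the final appeal to non-negativity of $w$ to pad the sum when $m<k$.
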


For our result, we need the following lemma about the greedy algorithm.
\begin{lemma} \label{lemma:before_still_selected}
    Let $i\in [n]$ be some element in $M$. Consider two weight functions  $w,\hat{w}$, where $\hat{w}(i)\ge w(i)$ and $\hat{w}(j)\le w(j)$ for every $j\neq i$. If $i$ is selected by the greedy algorithm with weights $w$ and some tie-breaking rule, then $i$ is also selected by the greedy algorithm with weights $\hat{w}$ and the same tie-breaking rule.
\end{lemma}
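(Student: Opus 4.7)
The plan is to compare the sets of elements processed before $i$ under the two weight functions and then leverage the submodularity of the matroid rank function to carry the selection of $i$ across.

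First I would let $A_w$ denote the set of elements $j \neq i$ that precede $i$ in the greedy ordering under $w$---those with $w(j) > w(i)$, together with those satisfying $w(j) = w(i)$ whose tie-breaking priority is above $i$'s---and define $A_{\hat w}$ analogously for $\hat w$. The first key step is to show $A_{\hat w} \subseteq A_w$. For any $j \in A_{\hat w}$, the hypotheses give the chain $w(j) \ge \hat w(j) \ge \hat w(i) \ge w(i)$. If $\hat w(j) > \hat w(i)$, this forces $w(j) > w(i)$, so $j \in A_w$. If instead $\hat w(j) = \hat w(i)$ with the tie-breaking rule placing $j$ before $i$, then whether $w(j) > w(i)$ or $w(j) = w(i)$, the same tie-breaking rule still places $j$ before $i$, so again $j \in A_w$. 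This case split is the most delicate point of the whole argument: one must carefully use that the tie-breaking rule is fixed and shared between the two orderings.

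Next I would invoke the standard characterization that greedy selects $i$ under $w$ exactly when $rank(A_w \cup \{i\}) > rank(A_w)$, which holds by assumption. Combining the inclusion $A_{\hat w} \subseteq A_w$ with the submodularity of the rank function (stated just after the definition of $rank$ in \Cref{sec:matroid_prelims}) yields
\[
rank(A_{\hat w} \cup \{i\}) - rank(A_{\hat w}) \ \ge\ rank(A_w \cup \{i\}) - rank(A_w) \ >\ 0.
\]
Finally, the greedy-built set $B_{\hat w}$ from $A_{\hat w}$ is maximal independent in $A_{\hat w}$ and therefore has size $rank(A_{\hat w})$. If $B_{\hat w} \cup \{i\}$ were dependent, then $B_{\hat w}$ would also be a maximal independent subset of $A_{\hat w} \cup \{i\}$, forcing $rank(A_{\hat w} \cup \{i\}) = rank(A_{\hat w})$ and contradicting the strict inequality above. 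Hence, when greedy under $\hat w$ processes $i$, its current independent set still accepts $i$, and $i$ is selected as desired.
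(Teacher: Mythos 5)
Your proof is correct, but it takes a genuinely different route from the paper's. Both arguments hinge on the same first observation --- that the set of elements preceding $i$ can only shrink when moving from $w$ to $\hat w$ (your $A_{\hat w} \subseteq A_w$ is exactly the paper's $\hat A \subseteq A$), and your case analysis on ties is handled correctly. From there the paper performs an explicit three-stage surgery on the orderings (first permuting $A$ internally, then moving $i$ forward to sit right after $\hat A$, then fixing the tail), justifying each stage via the optimality of greedy ($|I_i^\star| = rank(A)$) and downward-closedness of $\indep$. You instead compress all of this into the rank characterization of greedy acceptance --- $i$ is accepted iff $rank(A\cup\{i\}) > rank(A)$ --- followed by a single application of the submodularity of the rank function, which the paper states in \Cref{sec:matroid_prelims} but never actually uses in this proof (it uses it only later, in the candidate-partitioning lemma). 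Your version is shorter and makes the conceptually important point explicit: whether greedy accepts $i$ depends only on the \emph{set} of elements processed before $i$, not on their order, so shrinking that set can only help. The paper's version is more elementary in that it needs only downward-closedness and the basic greedy guarantee rather than submodularity, and it tracks the actual independent sets built by the algorithm, but the two proofs establish the same statement with the same hypotheses.
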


\begin{proof}
    Consider the ordering $\sigma$ which results by ordering the elements according to $w$ using some tie-breaking rule. By renaming, assume we break ties in favor of lower-index elements. Let $I^\star$ be the obtained independent set. Consider an element $i\in I^\star$ and consider $\hat{w}$ such that $\hat{w}(i)\ge w(i)$ and $\hat{w}(j)\le w(j)$ for $j\neq i$. Let $\hat{\sigma}$ and $\hat{I}^\star$ be the ordering and the independent set determined according to $\hat{w}$ when breaking ties in favor of lower-index elements.  Let $A$ and $\hat{A}$ be the sets of elements that appear before $i$ when ordering the elements according to orders $\sigma$ (i.e., according to $w$) and $\hat{\sigma}$ (i.e., according to $\hat{w}$), respectively. Since in order $\hat{\sigma}$, $\hat{w}(i)\ge w(i)$ while $\hat{w}(j)\le w(j)$ for $j\neq i$,  $\hat{A}\subseteq A$.  We transform $\sigma$ into $\hat{\sigma}$ in three steps, and show that after each step, $i$ is still chosen by the greedy algorithm (see Figure~\ref{fig:stages-tikz} for a schematic depiction of the transitions).
    
        First, consider an ordering $\sigma'$ which reorders elements in $A$ to have  the same relative ordering as in $\hat{\sigma}$, but does not change the  ordering of all other elements. In $\sigma'$, the set of elements ordered before $i$ is still $A$. Moreover, the first $|\hat{A}|$ elements in order $\sigma'$ are exactly $\hat{A}$. Let $I^\star_i$ and $I'_i$ be the sets of elements added by running the greedy algorithm on orders $\sigma$ and $\sigma'$, respectively, just before element $i$'s turn. By \Cref{thm:greedy}, $|I^\star_i|=|I'_i|=rank(A)$. Since $i$ is added to $I^\star$ in ordering $\sigma$, $I^\star_i\cup\{i\}\in \indep\rightarrow rank(A\cup \{i\})= rank(A)+1=|I'_i|+1$. Again, by \Cref{thm:greedy}, it must be that $i$ is added to $I'_i$ in ordering $\sigma'$. 

    For the second step, we move $i$ to be right after set $\hat{A}$ in the ordering $\sigma'$, bumping up one spot each element in $A\setminus \hat{A}$. Let $\sigma''$ be the resulting ordering. Notice that up until the position of element $i$ in the order, \textit{including}, the elements are ordered the same as in $\hat{\sigma}$. Let $I''_i$ be the set of elements added by the greedy algorithm in order $\sigma''$ up until element $i$'s turn. Since up until this point, the order of the elements in $\sigma'$ and $\sigma''$ is the same, $I''_i\subseteq I'_i$. Therefore, since $\indep$ is downwards-closed, it must be the case that $i$ can be added to $I''_i$ as well.

    To finish the proof, we now order the elements after $i$ to be in the same order as in $\hat{\sigma}$, thus recovering order $\hat{\sigma}$. Notice that elements ordered after $i$ in an instance of the greedy algorithm cannot determine whether $i$ is selected or not. Therefore, $i$ is selected by running greedy on $\hat{\sigma}$ to be in $\hat{I}^\star$, finishing the proof.
\end{proof}

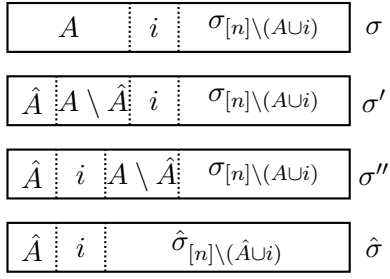
\begin{figure}[ht]
    \centering
\begin{tikzpicture}[scale = 0.65,thick]
    \draw (0,0) rectangle (7,1);
    \node at (0.5,0.5) {$\hat A$};
    \node at (1.5,0.5) {$i$};
    \node at (4.5,0.5) {$\hat{\sigma}_{[n] \setminus (\hat A\cup i)}$};
    \node at (7.5,0.5) {$\hat \sigma$};
    \foreach \x in {1,2}
        \draw[densely dotted] (\x,0) -- (\x,1);
    
    \draw (0,1.5) rectangle (7,2.5);
    \node at (0.5,2) {$\hat A$};
    \node at (1.5,2) {$i$};
    \node at (2.75,2) {${A} \setminus \hat{A} $};
    \node at (5.25,2) {$\sigma_{[n] \setminus (A\cup i)}$};
    \node at (7.5,2) {$\sigma''$};
    \foreach \x in {1,2,3.5}
        \draw[densely dotted] (\x,1.5) -- (\x,2.5);
    
    \draw (0,3) rectangle (7,4);
    \node at (0.5,3.5) {$\hat A$};
    \node at (1.75,3.5) {${A} \setminus \hat{A} $};
    \node at (3,3.5) {$i$};
    \node at (5.25,3.5) {$\sigma_{[n] \setminus (A\cup i)}$};
    \node at (7.5,3.5) {$\sigma'$};
    \foreach \x in {1,2.5,3.5}
        \draw[densely dotted] (\x,3) -- (\x,4);
    
    \draw (0,4.5) rectangle (7,5.5);
    \node at (1.25,5) {$A$};
    \node at (3,5) {$i$};
    \node at (5.25,5) {$\sigma_{[n] \setminus (A\cup i)}$};
    \node at (7.5,5) {$\sigma$};
    \foreach \x in {2.5,3.5}
        \draw[densely dotted] (\x,4.5) -- (\x,5.5);
\end{tikzpicture}
\caption{The transitions from $\sigma$ to $\hat{\sigma}$ used in the proof of \Cref{lemma:before_still_selected}.}
\label{fig:stages-tikz}
\end{figure}

We will also use the following known theorem about matroids.

\begin{theorem}[Matroid partition~(\citet{edmonds1965minimum})] \label{thm:partition}
Given a matroid $M$ over a ground set $[n]$, it is possible to find a partitioning $M = I_1 \uplus \dots \uplus I_t$ with $t$ pairwise disjoint independent sets $I_1, \dots, I_t \in \indep$ if and only if
\begin{eqnarray}
\forall S \subseteq [n],\quad  |S| \leq  t\cdot rank(S). \label{eq:partition_condition}
\end{eqnarray}

Moreover, there exist efficient algorithms to find such partitioning.
\end{theorem}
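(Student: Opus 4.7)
My plan is to prove both directions of the equivalence separately.

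The forward direction is immediate. Suppose $M$ admits a partition $[n] = I_1 \uplus \dots \uplus I_t$ with each $I_j \in \indep$. For any $S \subseteq [n]$, downward closure of $\indep$ implies that $S \cap I_j$ is independent, so $|S \cap I_j| \leq rank(S)$ for each $j$. Summing over $j$ yields $|S| = \sum_{j=1}^{t} |S \cap I_j| \leq t \cdot rank(S)$, which is~\eqref{eq:partition_condition}.

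For the reverse direction, my approach is to invoke the matroid union theorem. Given $t$ copies of $M$ on the same ground set $[n]$, the collection of sets expressible as $I_1 \cup \dots \cup I_t$ with each $I_j \in \indep$ forms a new matroid $M^{(t)}$ whose rank function is
$$
rank_{M^{(t)}}(S) \;=\; \min_{T \subseteq S}\bigl(|S \setminus T| + t \cdot rank(T)\bigr).
$$
A partition of $[n]$ into $t$ members of $\indep$ exists if and only if $[n]$ can be \emph{covered} by such a union, since disjointness can always be restored afterwards by assigning each element to exactly one of the covering sets and invoking downward closure. Covering $[n]$ is in turn equivalent to $rank_{M^{(t)}}([n]) = n$. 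Rearranging, the inequality $|[n] \setminus T| + t \cdot rank(T) \geq n$ is the same as $t \cdot rank(T) \geq |T|$, so the minimum in the displayed formula equals $n$ precisely when~\eqref{eq:partition_condition} holds for every $T \subseteq [n]$, completing the equivalence.

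The main obstacle is the matroid union theorem itself, specifically verifying that $M^{(t)}$ satisfies the exchange axiom and that its rank function takes the min-cut form above. The standard route is an augmenting-path argument on an auxiliary exchange digraph whose arcs encode feasible single-element swaps within each $I_j$; the existence of an augmenting path lets one grow a maximum-size union, while its absence exhibits a tight witness set $T$ that realizes the minimum. This same procedure yields the efficient algorithm claimed in the theorem: start with $I_1 = \dots = I_t = \emptyset$, and for each element either place it directly into some $I_j$ that still accepts it, or locate an augmenting sequence of swaps that re-routes elements across the exchange digraph to accommodate the new one; termination in polynomial time follows from standard BFS-based search over this digraph.
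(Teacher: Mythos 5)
Your proposal is correct. Note that the paper does not prove this statement at all: it is imported as a classical result with a citation to Edmonds (1965), so there is no in-paper argument to compare against. Your forward direction (split $S$ into $S\cap I_j$, use downward closure and $|S\cap I_j|\le rank(S)$, sum over $j$) is the standard counting argument and is airtight. Your reverse direction is the standard derivation from the matroid union theorem: with $t$ copies of $M$, the union matroid has rank $\min_{T\subseteq [n]}\bigl(|[n]\setminus T| + t\cdot rank(T)\bigr)$, and this equals $n$ exactly when $t\cdot rank(T)\ge |T|$ for all $T$; the passage from a cover to a partition by assigning each element to a single part and invoking downward closure is also correct. The one genuine dependency you lean on is the matroid union theorem itself (that $M^{(t)}$ is a matroid with the stated min-formula for its rank), which you correctly flag and sketch via the exchange-digraph augmenting-path argument; that same argument supplies the claimed polynomial-time algorithm. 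This is essentially Edmonds' own route, so the proposal is a faithful reconstruction of the cited proof rather than a new one.
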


\section{Eating Mechanism for Single-Item Auctions with SOS Valuations} \label{sec:eating}

\mfe{In this section, we present a truthful mechanism that improves upon the state-of-the-art approximation for single-item auctions with SOS valuations.} 
\dme{We start by introducing a new \emph{eating mechanism}, and then show that this mechanism is truthful and provides a $5$-approximation to the optimal welfare for SOS valuations. The main theorem of this section follows.}

\begin{theorem}\label{thm:SOS-5approx}
    \mfe{Mechanism~\ref{alg:eating} (the eating mechanism) is}
    an EPIC-IR mechanism that obtains a $5$-approximation to the optimal welfare when the bidders have interdependent values with private SOS valuations.
\end{theorem}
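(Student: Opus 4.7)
The plan is to analyze the eating mechanism in three parts: verify truthfulness through Proposition~\ref{lem:pricing}, establish feasibility via LP-duality, and prove a $5$-approximation using the self-bounding property of SOS (Lemma~\ref{lem:self-bounding}). I envision the mechanism running $n$ parallel eating processes on a unit pool of allocation probability, one per bidder $i$. In process $i$, bidder $i$ eats according to her own value $\val_i(\sigs)$, while each $j\neq i$ eats at a rate determined by the shadow value $\low{j}{i}$, which is invariant in $\sig_i$. Bidder $i$'s final allocation $\alloc_i$ is the quantity she consumes in her own process, after a uniform scaling that guarantees global feasibility.

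For truthfulness, by construction $\alloc_i$ depends only on $(\sigs_{-i},\vals_{-i},\val_i(\sigs))$: the shadow values $\low{j}{i}$ do not involve $\sig_i$, and $\val_i$ enters only through the scalar $\val_i(\sigs)$. Monotonicity in $\val_i(\sigs)$ follows because a larger value moves $i$ earlier in the eating order of her own process, only weakly increasing her consumption. Proposition~\ref{lem:pricing} then delivers EPIC-IR together with the corresponding payment rule.

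The main obstacle is feasibility: a priori there is no global link between the $n$ processes, so $\sum_i \alloc_i$ could exceed $1$. The plan is to write the per-bidder consumption as the optimal value of a primal LP whose constraints encode both the within-process eating schedules and the aggregate inequality $\sum_i\alloc_i \le 1$, then take its dual to extract multipliers that certify a uniform scaling constant enforcing $\sum_i\alloc_i \le 1$ pointwise on every signal/valuation realization. This scaling is also where the numerical constant $5$ is expected to emerge, since it quantifies the welfare lost to enforce feasibility and must be balanced against the self-bounding constant.

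For the approximation, let $i^\star = \argmax_i \val_i(\sigs)$, so that $\opt = \val_{i^\star}(\sigs)$. Applying Lemma~\ref{lem:self-bounding} to $\val_{i^\star}$ yields $\sum_{j\neq i^\star}(\val_{i^\star}(\sigs)-\low{i^\star}{j}) \le \val_{i^\star}(\sigs)$, which controls the extent to which the shadow values $\low{i^\star}{j}$ (governing $i^\star$'s position in process $j$) can deviate from $\val_{i^\star}(\sigs)$; hence on average $i^\star$ sits near the front of every process' eating order and collects a significant share of the pool. The complementary contribution comes from bidders $j$ that eat before $i^\star$ in some process: monotonicity gives $\val_j(\sigs)\ge \low{j}{i^\star}$, so the welfare those bidders contribute is at least what the LP/dual analysis charges them. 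Combining these two sources against the feasibility scaling from the dual should yield expected welfare at least $\opt/5$, completing the theorem.
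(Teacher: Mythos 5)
Your architecture matches the paper's (truthfulness via Proposition~\ref{lem:pricing}, feasibility via LP duality, then the approximation bound), and the truthfulness part is complete and correct. But the two substantive components are missing, and you have placed the self-bounding lemma in the wrong one. In the paper, Lemma~\ref{lem:self-bounding} is the engine of the \emph{feasibility} proof, not the approximation: one writes bidder $i$'s pre-normalized share as the value of a primal LP, exhibits an explicit dual-feasible point with \emph{symmetric} multipliers $\beta_{i,j}=\gamma_{i,j}\gamma_{j,i}/n$ (the symmetry is what lets one swap $\low{j}{i}$ for $\low{i}{j}$ in the dual objective), and then applies self-bounding to \emph{every} $\val_i$ to bound $\sum_i y_i\le 1+\tfrac{3}{n}\sum_i\sum_{j\ne i}(1-\low{i}{j}/\val_i(\sigs))\le 4$. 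This is where the scaling constant $4$ comes from; your plan to ``extract multipliers that certify a uniform scaling constant'' names the right tool but does none of this work, and without the symmetrization trick the dual objective contains terms $\ln\val_i(\sigs)-\ln\low{j}{i}$ that self-bounding cannot control.

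Your approximation argument also does not go through as stated. Applying self-bounding to $\val_{i^\star}$ constrains the quantities $\low{i^\star}{j}$, which govern $i^\star$'s weight in the \emph{other} bidders' processes --- but $i^\star$ collects allocation only from her own process, where her weight is her true value $\val_{i^\star}(\sigs)$, so ``$i^\star$ sits near the front of every process' eating order and collects a significant share of the pool'' conflates the $n$ processes. What determines $i^\star$'s share is the competitors' shadow values $\low{j}{i^\star}$, about which self-bounding of $\val_{i^\star}$ says nothing, and indeed $i^\star$'s share can be arbitrarily small. The correct quantitative step (which your ``complementary contribution'' gestures at but does not supply) exploits the logarithmic start times: if $i^\star$ eats only $y_{i^\star}$, the process stops at time $y_{i^\star}-\ln\val_{i^\star}(\sigs)$, so every bidder who ate anything started by then and hence has value at least $\val_{i^\star}(\sigs)e^{-y_{i^\star}}$. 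Welfare is then at least $\val_{i^\star}(\sigs)\bigl(y_{i^\star}+(1-y_{i^\star})e^{-y_{i^\star}}\bigr)/4$, and minimizing $y+(1-y)e^{-y}\ge 0.8005$ over $y$ gives the factor $5=4/0.8$. So the constant $5$ is the \emph{product} of the feasibility scaling and this eating-time loss, not (as you suggest) the feasibility scaling alone; neither constant is derived in your proposal.
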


\mfe{The proof of Theorem~\ref{thm:SOS-5approx} unfolds as follows. In Lemma~\ref{lem:eating-truthful} we show that the mechanism is EPIC-IR, in Lemma~\ref{lem:5-approx} we show that the mechanism, if feasible, gives $5$-approximation to the optimal welfare, and in Lemma~\ref{lem:eating_feasible} we prove that the mechanism is feasible. Collectively, these lemmas prove Theorem~\ref{thm:SOS-5approx}.}

\mfe{Before presenting the lemmas, we give an informal description of the mechanism, and present a linear program whose solution coincides with the allocation probabilities produced by the eating mechanism (as shown in Lemma~\ref{lem:eating-lp}).}

\mfe{{\bf The eating mechanism.} Our mechanism considers $n$ separate eating processes, one for each bidder, where the eating process corresponding to bidder $i$ determines $i$'s share of the allocation probability. 
In bidder $i$'s eating process, we use $i$'s real value, and for all other bidders $j \neq i$, we use their shadow values (i.e., their values when $i$'s signal is zeroed out). 
In particular, every bidder $j$ starts eating at a time that depends on her (shadow) value (with higher valued bidders starting earlier). 
At each point in time, all the bidders eating at that time, eat at the same speed. The eating process ends when the total share eaten is $1$.\footnote{Note that some bidders may not even start eating before the process ends, leading to a share of 0.}  Bidder $i$ is then allocated her (normalized) share obtained in this process. \Cref{fig:eating} depicts the eating process used to determine $i$'s allocation.}

\begin{figure}[ht]
    \centering
    \begin{tikzpicture}
    \draw[thick,->] (0,0) -- (10,0);

    \draw[thick] (0,0.1) -- (0,-0.1);

    \node[anchor=west] at (10,0) {time};
    
    \filldraw (2,0) circle (1.5pt) node[below] {$-\ln \low{j}{i}$};
    \filldraw (4,0) circle (1.5pt) node[below] {$-\ln v_i(\sigs)$};
    \filldraw (8,0) circle (1.5pt) node[below] {$-\ln \low{j'}{i}$};

        \filldraw[red] (6-0.05,-0.05) rectangle +(3pt,3pt); \node[below,anchor=north] at (6,0) {$t$};

    \draw[thick, dashed, blue] (2,0.3) -- (6,0.3);
    \draw[thick, blue] (2,0.4) -- (2,0.2);
    \draw[thick,blue] (6,0.4) -- (6,0.2);
    
    \draw[thick, blue] (4,0.6) -- (6,0.6);
    \draw[thick, blue] (4,0.7) -- (4,0.5);
    \draw[thick, blue] (6,0.7) -- (6,0.5);

    \draw[densely dotted] (2,0) -- (2,0.3);
    \draw[densely dotted] (6,0.2) -- (6,0);
    
    \draw[densely dotted] (4,0) -- (4,0.6);
    \draw[densely dotted] (6,0.6) -- (6,0);
\end{tikzpicture}
    \caption{The eating process used to determine bidder $i$'s allocation. bidder $i$ starts eating at time $-\ln v_i(\sigs)$ and all other bidders $j\neq i$ start eating at time $-\ln \low{j}{i}$. At each point in time, all bidders eating at that time eat at the same speed. The solid blue line denotes bidder $i$'s share and the dashed blue line denotes $j$'s (pretend) share. The red square denotes the time $t$ when the sum of the blue lines adds up to $1$, thus halting the eating process.
    }
    \label{fig:eating}
\end{figure}
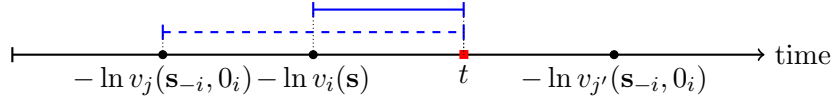

\begin{algorithm}[h]
\textbf{Function} $\texttt{eat}(\text{weight function }w:[n]\rightarrow \mathbb R_+):$
\begin{enumerate}
\item Each bidder $j$ starts eating at time $-\ln(w(j))$, and eats at a constant speed of $1$.
\item Eating stops at time $t$, when the item has been entirely eaten, that is when $\sum_{j=0}^n (t+\ln w(j))^+ = 1$.
\item Return the allocation, that is the vector $\textbf y$ where $y_j = (t+\ln w(j))^+$.
\end{enumerate}
\textbf{Mechanism} $\texttt{eating}:$
\begin{enumerate}
\item Elicit signals $\sigs[\hat]$ and valuation functions $\vals[\hat]$.
\item For each bidder $i$:\\
\quad Define the function $w_i$ where $w_i(i) = \val[\hat]_i(\sigs[\hat])$ and $w_i(j)=\low[\hat]{j}{i}$ for $j \neq i$.\\
\quad Set $x_i \gets \texttt{eat}(w_i)_i / 4$, that is, the $i$-th coordinate of $\texttt{eat}(w_i)$, divided by $4$.
\item Charge payments using Eq~\eqref{eq:pricing}.
\end{enumerate}
\renewcommand{\algorithmcfname}{Mechanism}
\caption{\textbf{Eating Mechanism}.}
\label{alg:eating}
\end{algorithm}

\sme
{For the sake of the analysis, we formulate the eating process as the solution of the following linear program ($P_i$):
\begin{align*}
\textbf{maximize}\quad& y_i&(P_i)\\
\text{such that}\quad & y_i \leq y_j+\ln(w(i))-\ln(w(j)) \quad  \forall j\neq i\\
& \textstyle\sum_i y_i \leq 1\\
&y_1, \dots, y_n \geq 0.\\
\end{align*}
\Cref{lem:eating-lp} shows that the (pre-normalized) share allocated to bidder $i$ by the eating mechanism equals the value of the  linear program $(P_i)$.

To establish the feasibility of Mechanism~\ref{alg:eating}, we will show in \Cref{lem:eating_feasible} that the sum of probabilities assigned to the bidders does not exceed 1. We obtain an upper-bound on the (pre-normalized) share of bidder $i$ in the primal program $(P_i)$ using a feasible solution to the dual program $(D_i)$ presented below.

\begin{align*}
\textbf{minimize}\quad& \alpha_i+\textstyle\sum_{j\neq i} \beta_{i,j} \cdot (\ln(w(i))-\ln(w(j)))&(D_i)\\
\text{such that}\quad& \alpha_i + \textstyle\sum_{j\neq i}\beta_{i,j} \geq 1\\
 &\alpha_i \geq \beta_{i,j} \quad \forall j\neq i \\
& \alpha_i \geq 0, \quad \;\beta_{i,j} \geq 0 \quad \forall j\neq i.
\end{align*}}

\mfe{The following lemma shows that the value of the primal LP $(P_i)$ of bidder $i$ equals the (pre-normalized) share of bidder $i$ in the eating mechanism.}

\begin{lemma}\label{lem:eating-lp}
Given a weight function $w$, the function \texttt{eat}$(w)$ assigns bidder $i$ a share that equals the value of the linear program $(P_i)$ if it is feasible, and $0$ otherwise.
In particular, it is non-decreasing in $w(i)$, and \sme{non-increasing} in every $w(j)$ with $j\neq i$.
\end{lemma}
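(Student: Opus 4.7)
The plan is to identify $\texttt{eat}(w)_i$ with the optimal value of $(P_i)$ (with the convention that an infeasible LP has value $0$), after which monotonicity drops out by inspection of how the constraints of $(P_i)$ depend on $w$. Throughout, let $t$ denote the stopping time of $\texttt{eat}(w)$ and $y_j^{\mathrm{eat}} := (t+\ln w(j))^+$, so that $\sum_j y_j^{\mathrm{eat}} = 1$ and $\texttt{eat}(w)_i = y_i^{\mathrm{eat}}$.

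First I would handle the case $y_i^{\mathrm{eat}} > 0$ (i.e.\ $-\ln w(i) < t$) by showing that $(y_j^{\mathrm{eat}})_j$ is a feasible point of $(P_i)$. Non-negativity and the budget constraint $\sum_j y_j^{\mathrm{eat}} = 1$ hold by construction. The pairwise constraint $y_i \le y_j + \ln w(i) - \ln w(j)$ follows from a two-case analysis: when $y_j^{\mathrm{eat}} > 0$ both sides equal $t + \ln w(i)$, and when $y_j^{\mathrm{eat}} = 0$ we have $-\ln w(j) \ge t$, yielding $t + \ln w(i) \le \ln w(i) - \ln w(j)$.

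To show optimality, I would take any feasible $(y_j)_j$ with $y_i = c$ and combine the pairwise constraint with non-negativity to deduce $y_j \ge (c + \ln w(j) - \ln w(i))^+$ for every $j \ne i$. Writing $t' := c - \ln w(i)$, this gives
\[
\sum_j y_j \;\ge\; \sum_j (t' + \ln w(j))^+ \;=:\; F(t').
\]
The function $F$ is continuous and non-decreasing, satisfies $F(t) = 1$ by definition of the stopping time, and is strictly increasing past $t$ (because at least one bidder must be eating at $t$, else $F(t) = 0$). So $c > y_i^{\mathrm{eat}}$ would force $t' > t$ and hence $F(t') > 1$, contradicting the budget constraint. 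The case $y_i^{\mathrm{eat}} = 0$ (i.e.\ $-\ln w(i) \ge t$) is handled analogously with $c = 0$: if $-\ln w(i) > t$ then $F(-\ln w(i)) > 1$ rules out any feasible point and $(P_i)$ is infeasible, while if $-\ln w(i) = t$ then $F(-\ln w(i)) = 1$ and $y_i = 0$ is the only feasible value. In either sub-case the LP value matches $\texttt{eat}(w)_i = 0$.

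Finally, monotonicity is immediate from the LP: the objective, the non-negativity constraints, and the budget constraint do not depend on $w$, while each pairwise constraint $y_i \le y_j + \ln w(i) - \ln w(j)$ is loosened whenever $w(i)$ increases or some $w(j)$ with $j \ne i$ decreases. Hence the feasible set of $(P_i)$ grows weakly under such perturbations, so its optimal value — equivalently $\texttt{eat}(w)_i$ — is non-decreasing in $w(i)$ and non-increasing in every $w(j)$ with $j \ne i$ (and feasibility can only be gained, not lost, in these directions, so the convention for infeasibility does not disrupt monotonicity). The main subtlety I anticipate is the strict-monotonicity step for $F$ at $t$, which relies on noting that some bidder must be actively eating at the stopping time in order for $F(t) = 1$ to hold.
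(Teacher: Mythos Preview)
Your proposal is correct and follows essentially the same route as the paper. The paper introduces an auxiliary linear program $(P_i')$ with a free time variable $t$ and slack variables $z_j \geq (t+\ln w(j))$, whose optimum is transparently the stopping time of the eating process, and then exhibits a value-preserving correspondence between feasible points of $(P_i)$ and solutions of $(P_i')$ with non-negative value. Your argument does the same thing without naming $(P_i')$: your function $F(t') = \sum_j (t'+\ln w(j))^+$ is exactly the tight budget of $(P_i')$ at time $t'$, and your key inequality $\sum_j y_j \geq F(y_i - \ln w(i))$ is the ``converse'' direction of the paper's correspondence. Your treatment of the boundary case $-\ln w(i) = t$ and the strict-increase of $F$ past $t$ is slightly more explicit than the paper's, but there is no substantive difference in the underlying argument; the monotonicity conclusion is derived identically in both.
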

\begin{proof}
\sme{First, we give another linear program $(P_i')$ which is analogous to the eating procedure. We show that $(P_i)$ is feasible if and only if $(P_i')$ has a non-negative solution, in which case they both have the same value.
}

\begin{align*}
\textbf{maximize}\quad& t+\ln(w(i))&(P_i')\\
\text{such that}\quad& t+\ln(w(j)) \leq z_j \qquad \forall j\in[n]\quad &\\
&\textstyle\sum_i z_i \leq 1\\
& z_1, \dots, z_n \geq 0.
\end{align*}

\sme{Observe that }\dme{$t$ denotes the stopping time of the eating process, given that each bidder $j$'s share is at least $t + \ln (w_j)$ and the total share sums up to at most $1$.} Since $t$ can be negative, $(P_i')$ is always feasible. By construction, the $i$-th coordinate of $\texttt{eat}(w)$ is equal to $(t+\ln(w(i)))^+$. When $t+\ln(w(i)) \geq 0$, setting $y_j=z_j$ and $y_i=t+\ln(w(i))$ gives a feasible solution of the linear program $(P_i)$. Conversely, given a feasible solution $(y_1, \dots, y_n)$ to $(P_i)$, setting $z_j = y_j$ and $t = y_i-\ln(w(i))$ gives a solution to $(P_i')$  with value $t+\ln(w(i)) \geq 0$.

To show the monotonicity properties, observe that increasing $w(i)$ or decreasing $w(j)$ makes the constraints of $(P_i)$ looser, which cannot decrease the objective or make the program infeasible.
\end{proof}

\mfe{We are now ready to prove Theorem~\ref{thm:SOS-5approx}. We first show that Mechanism \ref{alg:eating} is truthful.}

\begin{lemma}
\label{lem:eating-truthful}
Mechanism~\ref{alg:eating} is EPIC-IR.
\end{lemma}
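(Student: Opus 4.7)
The plan is to invoke \Cref{lem:pricing}, which states that an allocation rule $\allocs$ is EPIC-IR implementable if, for every bidder $i$, the allocation $x_i$ depends only on $\sigs_{-i}$, $\vals_{-i}$, and $v_i(\sigs)$, and is non-decreasing in $v_i(\sigs)$. Since the mechanism uses the payment rule from \eqref{eq:pricing}, it suffices to verify these two conditions for the allocation produced by \texttt{eat}$(w_i)_i/4$.

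First, I would unpack the weight function $w_i$ used to allocate to bidder $i$: by construction, $w_i(i) = \hat v_i(\hat\sigs)$, and for $j \neq i$, $w_i(j) = \hat v_j(\hat \sigs_{-i}, 0_i)$, which depends only on $\hat v_j$ and $\hat \sigs_{-i}$. In particular, $w_i(j)$ for $j\neq i$ does not depend on $\hat s_i$ or $\hat v_i$ at all. Hence the weight vector $w_i$, and therefore the share $\texttt{eat}(w_i)_i$, is a function only of $\hat \sigs_{-i}$, $\hat \vals_{-i}$, and the single number $\hat v_i(\hat\sigs)$. This establishes the first condition of \Cref{lem:pricing}.

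Next, for monotonicity, I would directly cite \Cref{lem:eating-lp}, which says that $\texttt{eat}(w)_i$ is non-decreasing in $w(i)$. Since $w_i(i) = \hat v_i(\hat\sigs)$ and no other coordinate of $w_i$ depends on this value, $x_i = \texttt{eat}(w_i)_i / 4$ is non-decreasing in $\hat v_i(\hat\sigs)$. Combined with the first step, \Cref{lem:pricing} then implies that the allocation rule is EPIC-IR implementable, and that the payment rule defined in \eqref{eq:pricing} makes the mechanism EPIC-IR, completing the proof.

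I do not anticipate any significant obstacle here: the mechanism was clearly engineered so that the ingredients for \Cref{lem:pricing} are immediate from the definition of $w_i$ and the monotonicity property already proved in \Cref{lem:eating-lp}. The proof is essentially a two-line verification, which is why I would keep it short.
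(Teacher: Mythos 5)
Your proof is correct and follows exactly the paper's argument: both invoke \Cref{lem:pricing} and verify its two conditions, with the dependence structure following from the definition of $w_i$ and monotonicity from \Cref{lem:eating-lp}. Your write-up simply spells out in more detail the step the paper dismisses as holding ``by design.''
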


\begin{proof}
By \Cref{lem:pricing}, it suffices to show that $i$'s allocation depends only on $\sigs[\hat]_{-i}$, $\vals[\hat]_{-i}$, and $\val[\hat]_i(\sigs[\hat])$, and is monotone non-decreasing in $\val[\hat]_i(\sigs[\hat])$. 
The first property holds by design, and monotonicity holds by \Cref{lem:eating-lp}, showing that $x_i$ is increasing in $\val[\hat]_i(\sigs)$.
\end{proof}

\mfe{Given that the mechanism is truthful, we hereafter simplify notation and write $\sigs$ and $\vals$ instead of $\sigs[\hat]$ and $\vals[\hat]$.
The following lemma establishes the approximation factor of the mechanism.}

\begin{lemma}
\label{lem:5-approx}
Mechanism \ref{alg:eating} gives $5$-approximation to the optimal welfare.
\end{lemma}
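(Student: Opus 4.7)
The plan is to combine the LP characterization of $\texttt{eat}(w_i)_i$ from \Cref{lem:eating-lp} with the SOS self-bounding property (\Cref{lem:self-bounding}). Let $i^\star \in \argmax_i v_i(\sigs)$ and normalize so that $v_{i^\star}(\sigs) = 1$; the statement then reduces to showing $\sum_i v_i(\sigs) \cdot \texttt{eat}(w_i)_i \geq 4/5$.

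First I would analyze process $i^\star$ in detail. Monotonicity gives $w_{i^\star}(j) = v_j(\sigs_{-i^\star},0_{i^\star}) \leq v_j(\sigs) \leq v_{i^\star}(\sigs) = w_{i^\star}(i^\star)$ for every $j \neq i^\star$, so bidder $i^\star$ has the largest weight and starts eating first. Solving $(P_{i^\star})$ then yields the closed form $\texttt{eat}(w_{i^\star})_{i^\star} = (1 + \sum_{j \in S \setminus \{i^\star\}} \ln(1/w_{i^\star}(j)))/|S|$, where $S$ is the set of bidders with positive pretend share in process $i^\star$. If $S = \{i^\star\}$ then $\texttt{eat}(w_{i^\star})_{i^\star} = 1$ and the welfare is already at least $1/4 \geq 1/5$; the real work is the case $|S| > 1$.

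Second, for each competitor $j \in S \setminus \{i^\star\}$, I would show that bidder $j$'s own process contributes enough welfare to close the gap. Monotonicity gives $w_j(j) = v_j(\sigs) \geq v_j(\sigs_{-i^\star},0_{i^\star}) = w_{i^\star}(j)$, so $j$'s weight in her own process dominates her pretend weight in process $i^\star$. I would use this observation together with a carefully chosen primal-feasible solution to $(P_j)$ to lower-bound $\texttt{eat}(w_j)_j$ in terms of $\texttt{eat}(w_{i^\star})_j$. Simultaneously, applying \Cref{lem:self-bounding} to $v_{i^\star}$ yields $\sum_{j \neq i^\star} v_{i^\star}(\sigs_{-j},0_j) \geq (n-1)\,v_{i^\star}(\sigs)$, which bounds, in aggregate, how much $i^\star$'s pretend weight $w_j(i^\star)$ can dominate the competitors' own processes.

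Aggregating these bounds over process $i^\star$ and every competitor's process reduces the welfare inequality to a constrained minimization in the weight configuration, from which the constant $4/5$ should emerge. As a sanity check, in the symmetric instance with $v_j(\sigs) = v_j(\sigs_{-i},0_i) = r$ for all $j \neq i^\star$ and $i \neq j$, direct computation gives $4W = (1 + (n-1)(r - (1-r)\ln r))/n$, and $r \mapsto r - (1-r)\ln r$ attains its minimum $3 - r^\star - 1/r^\star$ (where $r^\star$ solves $\ln r^\star = 1/r^\star - 2$), which is just above $4/5$, so the approximation ratio tends to $5$ from below as $n \to \infty$. The main obstacle will be converting this symmetric calculation into a clean proof for arbitrary weight profiles; the margin over $4/5$ is razor-thin, so the inequality is delicate, and I expect the argument will require either a joint dual-variable construction exploiting the symmetry between $w_i(j)$ and $w_j(i)$, or a charging scheme that redistributes the welfare gap of process $i^\star$ onto the competitors' processes while consuming the SOS self-bounding slack.
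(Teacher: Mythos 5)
There is a genuine gap, and it sits exactly where you flag it: the aggregation across the $n$ eating processes. Your plan is to lower-bound each competitor's share in her \emph{own} process, $\texttt{eat}(w_j)_j$, by her pretend share $\texttt{eat}(w_{i^\star})_j$ in process $i^\star$, via a primal-feasible solution to $(P_j)$. This comparison does not follow from the monotonicity facts you cite. Going from process $i^\star$ to process $j$, bidder $j$'s own weight indeed increases ($w_j(j)=\val_j(\sigs)\ge \low{j}{i^\star}=w_{i^\star}(j)$) and $i^\star$'s weight decreases, but the weights of the remaining competitors change from $\low{k}{i^\star}$ to $\low{k}{j}$, and these are incomparable in general: a third bidder $k$ can be stronger in process $j$ than in process $i^\star$, eat faster, and push $\texttt{eat}(w_j)_j$ below $\texttt{eat}(w_{i^\star})_j$. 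Concretely, the candidate feasible point for $(P_j)$ obtained from process $i^\star$'s shares violates the constraint $y_j\le y_k+\ln(w_j(j))-\ln(w_j(k))$ whenever $\low{k}{j}$ exceeds $\low{k}{i^\star}$ by more than $w_j(j)/w_{i^\star}(j)$ multiplicatively. Your two fallback options (a joint dual construction, a charging scheme consuming the self-bounding slack) are left entirely unexecuted, so the argument does not close; as you note yourself, the inequality is razor-thin, which makes an unexecuted ``delicate'' step a real gap rather than a routine one.

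The missing idea is much simpler than what you are attempting, and it makes self-bounding unnecessary for this lemma (\Cref{lem:self-bounding} is only needed for feasibility, i.e., for \Cref{lem:eating_feasible}). Introduce the single benchmark process with \emph{true} weights $w^\star(j)=\val_j(\sigs)$. By the monotonicity in \Cref{lem:eating-lp}, raising every competitor's weight from $\low{j}{i}$ to $\val_j(\sigs)$ can only hurt $i$, so $\texttt{eat}(w_i)_i\ge \texttt{eat}(w^\star)_i$ for every $i$ simultaneously. This reduces the welfare bound to the single process $\texttt{eat}(w^\star)$, where the accounting is immediate: the shares sum to $1$, bidder $i^\star$ gets some $y_{i^\star}$, and any bidder who eats a positive amount must have started before the stopping time $y_{i^\star}-\ln \val_{i^\star}(\sigs)$ and hence has true value at least $\val_{i^\star}(\sigs)e^{-y_{i^\star}}$. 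The (normalized) welfare is therefore at least $\val_{i^\star}(\sigs)\bigl(y_{i^\star}+(1-y_{i^\star})e^{-y_{i^\star}}\bigr)/4\ge 0.8005\cdot\val_{i^\star}(\sigs)/4>\val_{i^\star}(\sigs)/5$. Your symmetric sanity check (minimum of $r-(1-r)\ln r$ just above $4/5$) correctly identifies the extremal structure and the constant, so you were circling the right bound; what you were missing is the one-line reduction that replaces $n$ mutually incomparable processes by one.
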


\begin{proof}
We define the weight function $w^\star(j) = \val_j(\sigs)$ mapping each bidder to their true value. First, by monotonicity of the function \texttt{eat}, we have that each bidder $i$ gets an allocation probability $x_i = \texttt{eat}(w_i)_i/4$ which is at least
$\texttt{eat}(w^\star)_i/4$, because we replaced low estimates with the true value for each bidder $j\neq i$ \dme{(i.e., $w^\star(j)\ge w_i(j)$ for all $j$)}. Hence, the expected welfare is at least
$$
\sum_{i=1}^n \val_i(\sigs) \cdot \frac{\texttt{eat}(w^\star)_i}{4}.
$$
Let $i^\star$ be the bidder with the highest value, and let $y_{i^\star}$ be the \dme{(pre-normalized) probability share} she receives from the eating procedure \dme{with weight function $w_{i^\star}$}. Observe that\dme{, in this eating process, all} bidders stop eating \dme{by} time $y_{i^\star}-\ln(\val_{i^\star}(\sigs))$, and thus the \dme{$(1-y_{i^\star})$} \dme{share} which has not been eaten by $i^\star$ has been eaten by someone who has value at least $\exp(\ln(\val_{i^\star}(\sigs))-y_{i^\star})$. Therefore, the overall welfare is greater than or equal to
$$\val_{i^\star}(\sigs) \cdot  \frac{y_{i^\star}+(1-y_{i^\star}) e^{-y_{i^\star}}}{4}.$$
The function $y+(1-y)e^{-y}$ has a minimum approximately equal to $0.8005$ at $y\approx 0.44$. Therefore, after \dme{normalizing the probability shares} by $4$, the expected welfare is at least $\val_{i^\star}(\sigs)\cdot \frac{0.8005}{4} \geq \val_{i^\star}(\sigs)/5$.
\end{proof}

\mfe{Finally, the following lemma confirms the mechanism's feasibility.}

\begin{lemma}\label{lem:eating_feasible}
Mechanism~\ref{alg:eating} is feasible when bidders have SOS valuation functions; that is, the sum of probabilities $x_1+\dots+x_n$ is at most $1$.
\end{lemma}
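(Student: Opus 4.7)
The plan is to upper bound each $\texttt{eat}(w_i)_i$ using LP duality and then to sum the resulting bounds. By \Cref{lem:eating-lp}, $\texttt{eat}(w_i)_i$ equals the optimal value of the primal $(P_i)$ whenever it is feasible (and is $0$ otherwise), so weak LP duality gives
\begin{equation*}
\texttt{eat}(w_i)_i \;\le\; \alpha_i + \sum_{j\neq i}\beta_{i,j}\bigl(\ln w_i(i) - \ln w_i(j)\bigr)
\end{equation*}
for any feasible $(\alpha_i,\beta_{i,\cdot})$ of the dual $(D_i)$.

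A natural family of feasible duals is parameterized by subsets $T_i\ni i$: setting $\alpha_i = \beta_{i,j} = 1/|T_i|$ for $j\in T_i\setminus\{i\}$ and $\beta_{i,j}=0$ otherwise yields $\texttt{eat}(w_i)_i \le \tfrac{1}{|T_i|} + \tfrac{1}{|T_i|}\sum_{j\in T_i\setminus\{i\}} \ln(w_i(i)/w_i(j))$. Writing $r_{i,j} = \low{j}{i}/\val_j(\sigs)\in[0,1]$, we have $\ln(w_i(i)/w_i(j)) = \ln(\val_i(\sigs)/\val_j(\sigs)) + \ln(1/r_{i,j})$. If the $T_i$'s are chosen symmetrically and of equal size (so that $j\in T_i \iff i\in T_j$ and $|T_i|$ is constant), then upon summing over $i$ the $\ln\val_i(\sigs)$ and $\ln\val_j(\sigs)$ contributions telescope to zero (reflecting the scale-invariance of $\texttt{eat}$), leaving an aggregate of the form $\sum_i \tfrac{1}{|T_i|}\bigl(1 + \sum_{j\in T_i\setminus\{i\}} \ln(1/r_{i,j})\bigr)$. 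This remainder is to be controlled via the SOS self-bounding property (\Cref{lem:self-bounding}), which gives $\sum_i (1-r_{i,j}) \le 1$ for each bidder $j$.

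The main obstacle is that $\ln(1/r_{i,j})$ is unbounded as $r_{i,j}\to 0$, while self-bounding only provides a linear bound on $\sum_i(1-r_{i,j})$. To prevent divergence, the $T_i$'s must restrict attention to bidders $j$ whose weight $w_i(j)$ is not much smaller than $w_i(i)$---essentially, only bidders actively eating in $i$'s process---so that each included $\ln(1/r_{i,j})$ term is bounded by a constant and can be charged against the $(1-r_{i,j})$ budget provided by self-bounding. A plausible choice is to take $T_i$ to contain those $j$ with $w_i(j) \ge c\cdot w_i(i)$ for a suitable constant $c$. Once the $T_i$'s are properly calibrated, the aggregate is bounded by $4$, matching the normalization factor $1/4$ in Mechanism~\ref{alg:eating}, so that $\sum_i x_i = \tfrac{1}{4}\sum_i \texttt{eat}(w_i)_i \le 1$, establishing feasibility.
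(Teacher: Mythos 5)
Your overall framework is the same as the paper's: bound each $\texttt{eat}(w_i)_i$ by a feasible solution to the dual $(D_i)$, sum the dual objectives, and control the total via the self-bounding property of SOS valuations. You also correctly identify the central obstacle, namely that $\ln(\val_i(\sigs)/\low{j}{i})$ can blow up while self-bounding only supplies a linear budget $\sum_{i\neq j}(1-\low{j}{i}/\val_j(\sigs))\le 1$. However, the proposal stops exactly where the proof actually lives: no concrete dual solution is exhibited, and the family you sketch cannot be ``calibrated'' to work. The telescoping of the $\ln\val_i(\sigs)-\ln\val_j(\sigs)$ terms requires the sets $T_i$ to be symmetric ($j\in T_i\iff i\in T_j$) \emph{and} of equal cardinality (so the coefficients $1/|T_i|$ and $1/|T_j|$ match), whereas the thresholding you propose ($j\in T_i$ iff $w_i(j)\ge c\cdot w_i(i)$) is defined separately in each bidder's process and yields neither property. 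Moreover, that threshold controls $\low{j}{i}/\val_i(\sigs)$, not the quantity $r_{i,j}=\low{j}{i}/\val_j(\sigs)$ whose logarithm you need to bound, so even the per-term bound $\ln(1/r_{i,j})\le O(1)$ does not follow. Finally, if the $|T_i|$ are not uniformly large, the $\sum_i 1/|T_i|$ contribution alone can exceed any constant. These are not presentational gaps; resolving them is the substance of the lemma.

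The paper's proof resolves the tension without any thresholding or case split. It sets $\beta_{i,j}=\gamma_{i,j}\cdot\gamma_{j,i}/n$ with
$\gamma_{i,j}=\bigl(1-\low{i}{j}/\val_i(\sigs)\bigr)/\bigl(\ln\val_i(\sigs)-\ln\low{i}{j}\bigr)$
(and $\gamma_{i,j}=1$ in the degenerate case), together with $\alpha_i=1-\sum_{j\neq i}\beta_{i,j}$. The inequality $-x\ln x\le 1-x$ gives $\low{i}{j}/\val_i(\sigs)\le\gamma_{i,j}\le 1$, so dual feasibility holds (each $\beta_{i,j}\le 1/n\le\alpha_i$). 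The point of this choice is that the weight $\gamma_{i,j}$ vanishes exactly fast enough that $\gamma_{i,j}\cdot(\ln\val_i(\sigs)-\ln\low{i}{j})=1-\low{i}{j}/\val_i(\sigs)$, converting every potentially unbounded log term into precisely a self-bounding term; the symmetry needed to swap $\low{j}{i}$ for $\low{i}{j}$ is supplied by the product form $\gamma_{i,j}\gamma_{j,i}$ rather than by telescoping, and the uniform $1/n$ plays the role your equal-size requirement was meant to play. Summing, the $\alpha$-part is at most $3$ and the $\beta$-part at most $1$, giving $\sum_i y_i\le 4$ and hence $\sum_i x_i\le 1$. (The paper also first restricts to the bidders with positive shares so that all weights are strictly positive and the LPs $(P_i)$ are feasible; your write-up should address this degenerate case as well.)
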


\begin{proof}
First, define the set $S$ of bidders $i$ who receive a positive probability $x_i > 0$ from the algorithm. For the sake of the analysis, observe that we can restrict our instance to bidders in $S$, because doing so cannot decrease the sum of probabilities. Indeed, removing $j\notin S$ does not decrease their probability $x_j=0$, and when computing the probability of a bidder $i\in S$, removing $j$ is equivalent to replacing $\low{j}{i}$ by $0$, which cannot decrease $x_i$ by the monotonicity property of \Cref{lem:eating-lp}. For simplicity of notations, we assume (without loss of generality) that $S = [n]$, i.e. we relabel bidders and set $n$ to be the size of $S$.

We are now ready to start bounding the probabilities, and we define
$$
\forall i\in [n],\quad
y_i = \texttt{eat}(w_i)_i > 0,
\qquad \text{where }w_i(i) = \val_i(\sigs)\text{ and }w_i(j) = \low{j}{i}\text{ for }j\neq i
$$
and our goal is to give an upper show that $y_1 + \dots + y_n \leq 4$. For all $i$, we consider the dual $(D_i)$ of the linear program $(P_i)$ from \cref{lem:eating-lp}. We now define a feasible solution to all dual linear programs. The main insight is that we will choose symmetric $\beta_{i,j}:=\gamma_{i,j}\cdot \gamma_{j,i}/n$, where $\gamma_{i,j}$'s will be set later. This way, we will be able to swap the indices $i$ and $j$ in $\low{i}{j}$ which appear in the upper-bound:
\begin{align*}
\sum_{i=1}^n y_i &\leq\sum_{i=1}^n \alpha_i + \sum_{i=1}^n \sum_{j\neq i} \beta_{i,j} \cdot (\ln(\val_i(\sigs))-\ln(\low{j}{i}))\\
&=\underbrace{\sum_{i=1}^n \alpha_i}_{A} + \underbrace{\sum_{i=1}^n \sum_{j\neq i} \beta_{i,j} \cdot (\ln(\val_i(\sigs))-\ln(\low{i}{j}))}_{B}.
\end{align*}
Next, using \Cref{lem:self-bounding}, we know that $\sum_{j\neq i}(1-\low{i}{j}/\val_i(\sigs)) \leq 1$. Thus, we want to choose $\gamma_{i,j}$ appropriately to be able to bound the sum $B$:
$$
\forall i\neq j,\qquad\gamma_{i,j} := \begin{cases}
\frac{1-\low{i}{j}/\val_i(\sigs)}{\ln(\val_i(\sigs)) - \ln(\low{i}{j})} &\text{if }\low{i}{j} < \val_{i}(\sigs)\\
1 &\text{if }\low{i}{j} = \val_{i}(\sigs)
\end{cases}.
$$
Using the fact that $-x \ln x \leq 1-x$ for all $0 < x < 1$, we have the following important property:
$$
\forall i\neq j,\qquad
\low{i}{j}/\val_i(\sigs) \leq \gamma_{i,j} \leq 1.
$$
We set $\alpha_i := 1-\sum_{j\neq i} \beta_{i,j}$. By construction, the inequality $\alpha_i+\sum_{j\neq i} \beta_{i,j} \geq 1$ is satisfied. Moreover, we have that each $\beta_{i,j}$ is smaller than $1/n$, therefore each $\alpha_i$ is at least $1/n$ and the inequalities $\alpha_i \geq \beta_{i,j}$ are also satisfied.

To conclude the proof, it remains to give an upper bound on $A$ and $B$.
\begin{align*}
A &= \sum_{i=1}^n \Big(1-\sum_{j\neq i} \beta_{i,j}\Big) = 1+\sum_{i=1}^n\sum_{j\neq i} \Big(\frac{1}{n}-\beta_{i,j}\Big)
&(\dme{\text{using } \alpha_i=1-\sum_{j\neq i}\beta_{i,j}})
\\
&= 1+\frac{1}{n}\sum_{i=1}^n\sum_{j\neq i} \Big(1-\gamma_{i,j}\cdot \gamma_{j,i}\Big)
&(\dme{\text{using }\beta_{i,j}=\gamma_{i,j}\cdot\gamma_{j,i}/n})
\\
&\leq 1+\frac{1}{n}\sum_{i=1}^n\sum_{j\neq i} \Big((1-\gamma_{i,j}) + (1-\gamma_{j,i})\Big)
&(\text{because $\gamma_{i,j}+\gamma_{j,i} \leq 1+\gamma_{i,j}\cdot\gamma_{j,i}$})\\
&\leq 1+\frac{2}{n}\sum_{i=1}^n\sum_{j\neq i} (1-\low{i}{j}/\val_i(\sigs)).
&(\text{because $\low{i}{j}/\val_i(\sigs) \leq \gamma_{i,j}$})
\end{align*}
\begin{align*}
B &= \sum_{i=1}^n \sum_{j\neq i} \frac{\gamma_{i,j}\cdot\gamma_{j,i}}{n} \cdot (\ln(\val_i(\sigs))-\ln(\low{i}{j}))&
\dme{(\text{using }\beta_{i,j}=\gamma_{i,j}\cdot\gamma_{j,i}/n)}
\\
&\leq \frac{1}{n}\sum_{i=1}^n \sum_{j\neq i} \gamma_{i,j} \cdot (\ln(\val_i(\sigs))-\ln(\low{i}{j}))
&\text{(using that $\gamma_{j,i}\leq 1$)}\\
&\leq \frac{1}{n}\sum_{i=1}^n \sum_{j\neq i}(1-\low{i}{j}/\val_i(\sigs)).
&\text{(replacing $\gamma_{i,j}$ with its expression)}
\end{align*}
Overall, \dme{by adding together $A$ and $B$} we obtained that
$$
\sum_{i=1}^n y_i \leq 1 + \frac{3}{n}\sum_{i=1}^n\sum_{j\neq i} (1-\low{i}{j}/\val_i(\sigs)).
$$
Using \Cref{lem:self-bounding}, this last expression is upper-bounded by $4$. 
\mfe{Recalling that, by Lemma~\ref{lem:eating-lp}, $y_i$ is the pre-normalized share, and that $x_i$ is set to $y_i/4$, this implies that $\sum_{i=1}^n x_i \le 1$, proving feasibility.}
\end{proof}

\section{Optimal Mechanism for Matroid Feasibility Constraints and $d$-Critical Valuations} \label{sec:matroids}

In this section, we devise a mechanism for $d$-critical valuations within settings where the feasible sets of bidders to be served form a matroid. \dme{We show that our mechanism obtains a $(d+1)$-approximation to the optimal welfare truthfully. This bound is optimal even for single-item settings. The main result of this section follows.
}

\begin{theorem}\label{thm:d-critical-matroid}
    For every setting with bidders with $d$-critical valuations, where the feasible sets of bidders form a matroid, 
    Mechanism \ref{alg:RCP} is an EPIC-IR mechanism that gives a $(d+1)$-approximation to the social welfare.
\end{theorem}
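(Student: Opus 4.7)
My plan is to define Mechanism~\ref{alg:RCP} and then verify, in order, that it is EPIC-IR, that the welfare-optimal feasible set $I^\star$ lies inside the mechanism's candidate set, and that the candidate set admits a partition into $d+1$ independent sets (the main obstacle). For the mechanism: after receiving reports $\sigs,\vals$, I define for each bidder $i$ the weight function $w_i$ with $w_i(i):=v_i(\sigs)$ and $w_i(j):=\low{j}{i}$ for $j\neq i$, and declare $i$ a \emph{candidate} iff $i$ is selected when the matroid greedy algorithm (\Cref{thm:greedy}) is run on $M$ with weights $w_i$ under a fixed tie-breaking rule. Let $C$ denote the candidate set. Edmonds' partition algorithm (\Cref{thm:partition}) then yields a partition $C=I_1\uplus\cdots\uplus I_{d+1}$ into independent sets; the mechanism picks a single $I_k$ uniformly at random, allocates to its members, and charges Myerson payments via~\eqref{eq:pricing}.

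EPIC-IR reduces to verifying the hypotheses of \Cref{lem:pricing}. By construction $w_i$ and hence $i$'s candidacy depend only on $\sigs_{-i},\vals_{-i},v_i(\sigs)$, and the final probability is $\tfrac{1}{d+1}$ if $i\in C$ and $0$ otherwise; monotonicity of $x_i$ in $v_i(\sigs)$ is exactly the content of \Cref{lemma:before_still_selected}, since raising $v_i(\sigs)$ only raises $w_i(i)$ while leaving $w_i(j)$ fixed for $j\neq i$. For the containment $I^\star\subseteq C$, let $w^\star(j):=v_j(\sigs)$ so that $I^\star$ is the $w^\star$-greedy output on $M$ (optimal by \Cref{thm:greedy}); for any $i\in I^\star$ we have $w_i(i)=w^\star(i)$ and $w_i(k)=\low{k}{i}\leq w^\star(k)$ for $k\neq i$ by monotonicity of $v_k$, so \Cref{lemma:before_still_selected} delivers $i\in C$.

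The main technical step is to show $|S|\leq(d+1)\cdot rank(S)$ for every $S\subseteq C$, which via \Cref{thm:partition} guarantees the desired partition. Fix $S\subseteq C$ and let $I_S$ be a maximum $w^\star$-weight independent subset of $S$, so $|I_S|=rank(S)$. It suffices to charge each $j\in S\setminus I_S$ to some $i\in I_S$ with $\low{i}{j}<v_i(\sigs)$, for then the $d$-critical property of $v_i$ caps the charges to any single $i$ at $d$ and we conclude $|S\setminus I_S|\leq d\cdot rank(S)$. I extract the charge via the span characterization of greedy: $k$ is selected by $w$-greedy on a matroid $M'$ iff $k\notin\mathrm{span}_{M'}(\{\ell:w(\ell)>w(k)\})$. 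Since $\mathrm{span}_{M|_S}(T)=\mathrm{span}_M(T)\cap S$ for $T\subseteq S$, the condition $j\in C$ implies $j$ is also selected by $w_j$-greedy on $M|_S$, i.e.\ $j\notin\mathrm{span}_{M|_S}(A')$ with $A':=\{k\in S:w_j(k)>w_j(j)\}$. On the other hand, $j\notin I_S$ means $j\in\mathrm{span}_{M|_S}(A)=\mathrm{span}_{M|_S}(I_S\cap A)$, where $A:=\{k\in S:w^\star(k)>w^\star(j)\}$ and $I_S\cap A$ is a basis of $A$ in $M|_S$ (the prefix of $I_S$ produced by $w^\star$-greedy on $M|_S$ up to $j$'s turn). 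If we had $\low{i}{j}=v_i(\sigs)$ for every $i\in I_S\cap A$, then $w_j(i)=w^\star(i)>w^\star(j)=w_j(j)$ for each such $i$, placing $I_S\cap A\subseteq A'$; this would force $j\in\mathrm{span}_{M|_S}(A')$, contradicting $j$'s selection by $w_j$-greedy on $M|_S$. Hence some $i\in I_S\cap A$ satisfies $\low{i}{j}<v_i(\sigs)$, closing the charge.

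Putting the pieces together, each $i\in I^\star$ is allocated with probability exactly $1/(d+1)$, so the expected welfare is at least $\sum_{i\in I^\star} v_i(\sigs)/(d+1)=\opt/(d+1)$, matching the lower bound of \Cref{prop:d_critical_lb}.
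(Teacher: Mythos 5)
Your proposal is correct, and its overall architecture coincides with the paper's: the same per-bidder greedy candidacy test, EPIC-IR via \Cref{lem:pricing} and \Cref{lemma:before_still_selected}, the containment $I^\star\subseteq\cands$, Edmonds' partition into at most $d+1$ independent sets, and uniform selection of one part. The genuine difference is in how you verify the hypothesis $|S|\le(d+1)\cdot rank(S)$ of \Cref{thm:partition}. The paper first proves it for $S=\cands$ by charging each candidate outside $I^\star$ to some $i\in I^\star$ whose value that candidate can reduce (using that $I^\star$ spans the whole matroid), and then transfers the bound to arbitrary $S\subseteq\cands$ via a ratio argument based on submodularity of the rank function. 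You instead run the charging argument directly inside every $S$: taking $I_S$ to be the greedy basis of the restriction $M|_S$, you use the span characterization of greedy (and the fact that spans commute with restriction) to show that any $j\in S\setminus I_S$ that is nevertheless a candidate must satisfy $v_i(\sigs_{-j},0_j)<v_i(\sigs)$ for some $i\in I_S$ preceding it, whence $|S\setminus I_S|\le d\cdot rank(S)$. This buys real robustness: the inequality $|S|/rank(S)\le|\cands|/rank(\cands)$ does not follow from submodularity alone for arbitrary matroids (parallel elements inside $S$ give counterexamples to the mediant step), whereas your per-subset charging needs no such transfer and handles every $S$ uniformly. The only nitpick is that your sets $A$ and $A'$ should be defined as the elements \emph{preceding} $j$ in the respective greedy orders (including ties broken against $j$) rather than via strict weight inequalities; since the same tie-breaking rule is used for $w^\star$ and $w_j$, and $w_j(i)=w^\star(i)$ for every uncharged $i$, the containment $I_S\cap A\subseteq A'$ survives this correction and the contradiction goes through unchanged.
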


Our mechanism first finds a set of potential \textit{candidates} $\mathcal{C}$, comprising the only bidders eligible for allocation. 
To determine whether bidder $i$ is a candidate, the following process takes place (for each bidder $i$ separately): the mechanism considers bidder $i$'s \textit{real} value alongside the low estimates of all other bidders $j \neq i$.
Bidders are sorted in non-increasing order of these values (breaking ties in favor of low index), and are added greedily to a set $I_i$ \dme {provided} an independent set is maintained. Bidder $i$ belongs to the set of candidates $\cands$ if $i$ belongs to $I_i$.

We first show that every bidder that belongs to the optimal solution (i.e., the welfare-maximizing solution based on the real values) also belongs to $\cands$.
Therefore, to establish a $(d+1)$-approximation, it suffices to show that every bidder in $\cands$ is served with probability  $1/(d+1)$. 
The set $\cands$, however, need not be an independent set in the matroid.
We then show that the condition of \Cref{thm:partition} holds when restricting attention to the bidders in $\cands$, for $t=d+1$, and thus $\mathcal{C}$ can be partitioned into at most $d+1$ independent sets. It follows that, by choosing each one of these $\le d+1$ independent sets with probability $1/(d+1)$, every bidder in $\cands$, and thus also every bidder in the optimal solution, is served with probability $1/(1+d)$, yielding the desired approximation guarantee (which is tight according to \Cref{prop:d_critical_lb}).

\begin{algorithm}[h]
\textbf{Function} $\texttt{greedy}(\text{weight function }w:[n]\rightarrow \mathbb R_+):$
\begin{enumerate}
\item Initialize $I=\emptyset$.
\item Sort bidders $i$ by decreasing weights $w(i)$, and let $\sigma$ be the resulting ordering, breaking ties in favor of lower-index bidders.
\item Go over bidders according to ordering $\sigma$, for a current bidder $j$, add $j$ to $I$ if $I\cup \{j\}\in \indep$.
\item Return the set $I$.
\end{enumerate}
\textbf{Mechanism} \texttt{CP}:
\begin{enumerate}
\item Elicit signals $\sigs[\hat]$ and valuation functions $\val[\hat]$. Initialize the set of candidates $\cands=\emptyset$.
\item For each bidder $i$:\\
\quad Define the function $w_i$ where $w_i(i) = \val[\hat]_i(\sigs[\hat])$ and $w_i(j)=\low[\hat]{j}{i}$ for $j \neq i$.\\
\quad Define $I_i \gets \texttt{greedy}(w_i)$, then add $i$ to $\cands$ if $i \in I_i$.
\item Find a partitioning of $\cands$ using at most $d+1$ independent sets $\indep_\cands$ (which exists by \Cref{lem:candidate_partitioning}).   
\item Serve each set $I\in \indep_\cands$ with probability $1/(d+1)$.
\item Charge payments using Eq~\eqref{eq:pricing}.
\end{enumerate}
\renewcommand{\thealgocf}{CP}
\renewcommand{\algorithmcfname}{Mechanism}
\caption{\textbf{Candidate Partitioning (CP) Mechanism}.}
\label{alg:RCP}
\end{algorithm}

\dme{Towards proving \Cref{thm:d-critical-matroid}}, 
we first show that the mechanism is truthful.

\begin{lemma} \label{lem:rpc_truthful}
    Mechanism \ref{alg:RCP} is EPIC-IR.
\end{lemma}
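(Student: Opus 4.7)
The CP mechanism has a particularly clean allocation rule. Because step~3 produces a partition of $\cands$ into pairwise disjoint independent sets, each candidate $i \in \cands$ belongs to exactly one set of the partition; combined with step~4 (serve each such set with probability $1/(d+1)$), we obtain
$$
x_i(\sigs, \vals) \;=\; \frac{\mathbf{1}[i \in \cands]}{d+1}.
$$
Hence, to apply \Cref{lem:pricing}, it suffices to verify two properties of the indicator $\mathbf{1}[i \in \cands]$: (i) it depends only on $\sigs_{-i}$, $\vals_{-i}$, and the scalar $\val_i(\sigs)$; and (ii) it is non-decreasing in $\val_i(\sigs)$.

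For (i), the mechanism declares $i \in \cands$ iff $i \in \texttt{greedy}(w_i)$, where $w_i(i) = \val_i(\sigs)$ and $w_i(j) = \low{j}{i}$ for $j \neq i$. The shadow value $\low{j}{i} = v_j(\sigs_{-i},0_i)$ is fully determined by $\sigs_{-i}$ together with $v_j$, and $v_j$ is part of $\vals_{-i}$. Thus the weight vector $w_i$, and hence the outcome of the deterministic \texttt{greedy} subroutine on $w_i$, depends only on $\sigs_{-i}$, $\vals_{-i}$, and $\val_i(\sigs)$, as required.

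For (ii), fix $\sigs_{-i}$ and $\vals_{-i}$ and raise $\val_i(\sigs)$. This increases the single coordinate $w_i(i)$ while leaving every other coordinate $w_i(j) = \low{j}{i}$ unchanged. \Cref{lemma:before_still_selected}, applied with $i$ playing the role of the distinguished element and the shared tie-breaking rule ``favor lower index'' used by \texttt{greedy}, then guarantees that if $i$ was previously selected by \texttt{greedy} then $i$ remains selected after the increase. Hence $\mathbf{1}[i \in \cands]$, and therefore $x_i$, is monotone non-decreasing in $\val_i(\sigs)$. Invoking \Cref{lem:pricing} with payments given by Eq.~\eqref{eq:pricing} concludes the argument.

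The only subtlety worth flagging as a potential obstacle is that the matroid partition produced in step~3 need not be canonical: several valid partitions of $\cands$ into at most $d+1$ independent sets may exist, and the specific choice may in principle depend on the entire profile (including $i$'s report). The closed-form expression above sidesteps this concern, since for \emph{any} valid partition each candidate belongs to exactly one part, so the marginal probability that $i$ is served is deterministically $1/(d+1)$ independently of how the partition is produced. This is precisely what lets us represent $x_i$ in the clean form above and plug directly into \Cref{lem:pricing}.
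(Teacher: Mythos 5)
Your proof is correct and follows essentially the same route as the paper's: both reduce to \Cref{lem:pricing} by observing that a candidate is served with probability exactly $1/(d+1)$ regardless of the composition of $\cands$ or the choice of partition, and both derive monotonicity of candidacy in $\val_i(\sigs)$ from \Cref{lemma:before_still_selected}. Your explicit closed form $x_i = \mathbf{1}[i\in\cands]/(d+1)$ and the remark about non-canonical partitions just make explicit what the paper states more briefly.
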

\begin{proof}
    By \Cref{lem:pricing}, it suffices to show that the allocation depends only on $\sigs[\hat]_{-i}$, $\vals[\hat]_{-i}$, and $\val[\hat]_i(\sigs[\hat])$, and is monotone non-decreasing in $\val[\hat]_i(\sigs[\hat])$. We first observe that if $i$ is a candidate, then $i$ is allocated with probability $1/(d+1)$, regardless of the composition of $\cands$. To determine whether $i$ is a candidate, we consider $w_i(i)$'s relative position compared to every other $w_i(j)$, where these values depend only on $\sigs[\hat]_{-i}$, $\vals[\hat]_{-i}$, and $\val[\hat]_i(\sigs[\hat])$. An increase in $\val[\hat]_i(\sigs[\hat])$ boosts $w_i(i)$ without affecting all other $w_i(j)$ values. Thus, by \Cref{lemma:before_still_selected}, if $i$ is added to $I_i$ for some $\val[\hat]_i(\sigs[\hat])$, it is also added when $\val[\hat]_i(\sigs[\hat])$ is increased, establishing monotonicity. This implies the EPIC-IR property of the mechanism. 
\end{proof}

Since the mechanism is EPIC-IR, hereafter, we use $\vals$ and $\sigs$ instead of $\vals[\hat]$ and $\sigs[\hat]$.

Let $\sigma$ be the ordering of the bidders according to their \textit{actual values} $v_i(\sigs)$, with ties broken in favor of lower-index bidders. Let $I^\star$ be the independent set returned by the greedy algorithm described above, iterating over bidders in the order of $\sigma$ (and adding bidder $i$ into $I^\star$ if $I^\star\cup {i} \in \indep$). By Theorem~\ref{thm:greedy}, $I^\star$ is a welfare-maximizing feasible set of bidders, i.e., $\sum_{i\in I^\star} v_i(\sigs) = \max_{I\in \indep} \sum_{i\in I} v_i(\sigs)$.

The following lemma shows that all bidders in $I^\star$ will be candidates. 

\begin{lemma} \label{lem:opt_candidates}
    Let $I^\star$ be the welfare-maximizing set computed by the greedy algorithm (iterating over bidders according to the order $\sigma$). It holds that $I^\star\subseteq \cands$.
\end{lemma}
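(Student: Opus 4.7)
The plan is to prove $I^\star \subseteq \cands$ by a direct application of \Cref{lemma:before_still_selected}. Fix an arbitrary bidder $i \in I^\star$; we need to show $i \in I_i$, which by construction of the mechanism implies $i \in \cands$.

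First, I would introduce the ``true'' weight function $w^\star(j) := v_j(\sigs)$ for every $j \in [n]$, and observe that $I^\star$ is precisely the output of $\texttt{greedy}(w^\star)$ (with ties broken by lower index), by definition of $\sigma$ and $I^\star$. Next, I would compare $w^\star$ with the weight function $w_i$ that the mechanism uses to decide $i$'s candidacy: $w_i(i) = v_i(\sigs) = w^\star(i)$, and for every $j \neq i$, $w_i(j) = \low{j}{i} \leq v_j(\sigs) = w^\star(j)$, where the inequality follows from monotonicity of the valuation function $v_j$ (zeroing out $s_i$ can only decrease $v_j$).

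With these inequalities in hand, I would apply \Cref{lemma:before_still_selected} with $w = w^\star$ and $\hat{w} = w_i$: since $i \in I^\star$ is selected by greedy with weights $w^\star$ and the lower-index tie-breaking rule, and since $\hat{w}(i) \geq w(i)$ and $\hat{w}(j) \leq w(j)$ for all $j \neq i$, the lemma guarantees that $i$ is also selected by greedy with weights $w_i$ under the same tie-breaking rule. Hence $i \in I_i$, and so $i \in \cands$ by Step 2 of Mechanism~\ref{alg:RCP}.

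Since $i \in I^\star$ was arbitrary, this yields $I^\star \subseteq \cands$. There is essentially no main obstacle here: the bulk of the work has already been done in \Cref{lemma:before_still_selected}, and the only content specific to this lemma is the observation that the shadow-value substitution only weakens the weights of bidders other than $i$ while leaving $w(i)$ unchanged, which is exactly the hypothesis of that lemma.
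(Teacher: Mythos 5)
Your proposal is correct and follows exactly the paper's own argument: identify $I^\star$ with the output of greedy on the true weights $w^\star$, note that passing to $w_i$ keeps $w_i(i)=v_i(\sigs)$ while only (weakly) lowering the other weights by monotonicity, and invoke \Cref{lemma:before_still_selected}. No gaps.
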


\begin{proof}
    Consider a bidder $i\in I^\star$. Consider the weight functions $w(j)=v_j(\sigs)$ for all $j$, and $\hat{w} = w_i$ as defined in Mechanism~\ref{alg:RCP}. Notice that $w(i)=v_i(\sigs)=\hat{w}(i)$ and that $w(j)=v_j(\sigs)\ge \low{j}{i}=\hat{w}(j)$ for all $j\neq i$. Notice also that $I^\star$ and $I_i$ are the independent sets resulting from the greedy algorithm using weight functions $w$ and $\hat{w}$, respectively, breaking ties in favor of lower-index bidders. Therefore, by $\Cref{lemma:before_still_selected}$, $i\in I_i$, which implies that $i\in \cands$.
\end{proof}

We next show that there exists a partitioning of $\mathcal{C}$ into at most $d+1$ independent sets.

\begin{lemma} \label{lem:candidate_partitioning}
    There exists a set of $k\le d+1$ disjoint independent sets $\indep_{\cands}=\{I_1,\ldots, I_{k}\}$ such that $I_1 \uplus \dots \uplus I_{k}=\cands$.
\end{lemma}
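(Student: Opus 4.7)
The plan is to apply Edmonds' matroid partition theorem (\Cref{thm:partition}) with $t = d+1$. Concretely, I will verify that for every $S \subseteq \cands$ the condition $|S| \leq (d+1)\cdot rank(S)$ holds; the conclusion then follows directly from \Cref{thm:partition}.

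Fix $S \subseteq \cands$, set $r = rank(S)$, and let $O^\star \subseteq S$ be the independent set returned by $\texttt{greedy}$ run on $S$ with weights $v_j(\sigs)$ and the mechanism's lower-index tie-breaking rule; by \Cref{thm:greedy}, $|O^\star| = r$. The core of the argument will be to construct a map $\phi : S \setminus O^\star \to O^\star$ with the property that every $i$ is \emph{critical} for $v_{\phi(i)}$, meaning $\low{\phi(i)}{i} < v_{\phi(i)}(\sigs)$. Because each $v_j$ is $d$-critical, at most $d$ bidders can be critical for any fixed $v_{\phi(i)}$, so every element of $O^\star$ has at most $d$ preimages under $\phi$. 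This immediately gives $|S \setminus O^\star| \leq d \cdot r$ and hence $|S| \leq (d+1)r$, as required.

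To construct $\phi(i)$ I will compare two greedy runs. Let $P_i \subseteq S$ be the bidders of $S$ processed strictly before $i$ under the weights $v_j(\sigs)$ (these are exactly the elements sitting in greedy-on-$S$'s accumulator at the moment it first considers $i$), and let $A_i \subseteq [n]$ be the bidders processed strictly before $i$ in the order $\sigma_i$ used by $\texttt{greedy}(w_i)$. Since $i \notin O^\star$, greedy-on-$S$ did not add $i$, so $(O^\star \cap P_i) \cup \{i\}$ is dependent, i.e.\ $i \in \mathrm{span}(O^\star \cap P_i)$. On the other hand, $i \in \cands$ means $i \in I_i$, so $A_i \cup \{i\}$ is independent, i.e.\ $i \notin \mathrm{span}(A_i)$. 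Together these force $O^\star \cap P_i \not\subseteq A_i$, and I will pick $\phi(i)$ to be any element of $(O^\star \cap P_i) \setminus A_i$.

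The main subtlety, which I expect to be the only step requiring care, is verifying that this $\phi(i)$ is genuinely critical for $v_{\phi(i)}$. Monotonicity of $v_{\phi(i)}$ already gives $w_i(\phi(i)) = \low{\phi(i)}{i} \leq v_{\phi(i)}(\sigs)$, so it suffices to rule out equality. A short case analysis on the defining conditions of $P_i$ and $A_i$ (splitting on whether $v_{\phi(i)}(\sigs) > v_i(\sigs)$ or $v_{\phi(i)}(\sigs) = v_i(\sigs)$ with $\phi(i) < i$) combined with the lower-index tie-breaking rule shows that any $j \in P_i \setminus A_i$ must satisfy $w_i(j) < v_j(\sigs)$ strictly; applying this with $j = \phi(i)$ yields $\low{\phi(i)}{i} < v_{\phi(i)}(\sigs)$, completing the argument.
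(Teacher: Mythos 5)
Your proof is correct, and it takes a genuinely different route from the paper's on the step that actually requires work. Both arguments reduce the lemma to Edmonds' condition $|S|\le (d+1)\cdot rank(S)$ and both rest on the same core insight --- a bidder that greedy-on-true-values rejects can only be a candidate if it is critical for some bidder that greedy accepts --- but you verify the condition \emph{directly for every} $S\subseteq\cands$, by building a charging map $\phi:S\setminus O^\star\to O^\star$ out of span comparisons between the two greedy runs and then invoking $d$-criticality to bound the preimages. The paper instead verifies the condition only for $S=\cands$ (via \Cref{lemma:before_still_selected} and a union bound over the critical sets of the bidders in $I^\star$) and then extends to general $S$ through a rank-ratio comparison. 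Your direct treatment buys robustness: the paper's extension step deduces $|S|/rank(S)\le|\cands|/rank(\cands)$ from $rank(\cands)-rank(S)\le|\cands|-|S|$, a mediant inequality that does not follow from submodularity alone for arbitrary subsets of an arbitrary matroid, so proving the condition for each $S$ from scratch, as you do, is a genuine strengthening rather than a detour. I also checked the tie-breaking case analysis you only sketched: $\phi(i)\in P_i$ forces $v_{\phi(i)}(\sigs)\ge v_i(\sigs)$ with $\phi(i)<i$ in case of equality, while $\phi(i)\notin A_i$ forces $\low{\phi(i)}{i}\le v_i(\sigs)$ with $\phi(i)>i$ in case of equality; the two equality cases are incompatible, so $\low{\phi(i)}{i}<v_{\phi(i)}(\sigs)$ holds strictly, as you claim.

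Two small imprecisions, neither fatal. First, $P_i$ is the set of bidders processed before $i$, not the contents of the greedy accumulator at that moment (the accumulator is $O^\star\cap P_i$, which is the set your argument actually uses). Second, ``$i\in I_i$ implies $A_i\cup\{i\}$ is independent'' is false as stated, since $A_i$ itself need not be independent; the correct statement is that the accumulator of $\texttt{greedy}(w_i)$ upon reaching $i$ is a basis of $A_i$, so accepting $i$ gives $i\notin\mathrm{span}(A_i)$ --- which is exactly the fact you need, so the argument stands once rephrased.
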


\begin{proof}
    We show that Eq.~\eqref{eq:partition_condition} holds for $\cands$ with $t=d+1$, namely, that 
    $|S| \leq  (d+1)\cdot rank(S)$ for every $S \subseteq \cands$.
    We first show that Eq.~\eqref{eq:partition_condition} holds for $S=\cands$. Since $I^\star\subseteq \cands$ (by \Cref{lem:opt_candidates}), $rank(\cands)=rank(M) = |I^\star|$. Thus, to show  that Eq.~\eqref{eq:partition_condition} holds, it suffices to show that $|\cands|\le (d+1)\cdot|I^\star|.$ For $i\notin I^\star$ to be a candidate, there must be at least one $j\in I^\star$ such that $\low{j}{j} < v_j(\sigs)$. Otherwise, by \Cref{lemma:before_still_selected}, all bidders in $I^\star$ will be added to $I_i$ when iterating over the bidders in order $\sigma_i$, and since $rank(I^\star)=rank(M)$, bidder $i$ would not be selected. 

    For $i\in I^\star$, let $D_i$ be the set of bidders for which $\low{j}{j} < v_j(\sigs)$. 
    Since $v_i$ is $d$-critical, $|D_i|\le d$. By the above argument, $\cands\setminus I^\star \subseteq \cup_{i\in I^\star} D_i$.
    Therefore, 
    $$|\cands| = |\cands\setminus I^\star|+|I^\star|\le |\cup_{i\in I^\star} D_i| + |I^\star| \le d\cdot |I^\star|+|I^\star| = (d+1)\cdot|I^\star|,$$
    where the second inequality follows from the union bound. Therefore, Eq.~\eqref{eq:partition_condition} holds for the set $\cands$. 

    Now consider a set of candidates $S\subseteq \cands$. Since the rank function is submodular with marginal contribution at most 1, $$rank(\cands)-rank(S)\le |\cands| - |S|.$$
    Therefore, 
    $$\frac{|S|}{rank(S)}\le \frac{|S| + (|\cands| - |S|)}{rank(S) + (rank(\cands)-rank(S))} = \frac{|\cands|}{rank(\cands)} \le d+1,$$
    satisfying Eq.~\eqref{eq:partition_condition}.
    It follows by \Cref{thm:partition} that one can find a partition of $\mathcal{C}$ into at most $d+1$ independent sets, as desired.
\end{proof}

We are now ready to prove the main theorem of this section. 

\begin{proof}[Proof of \Cref{thm:d-critical-matroid}]
    By \Cref{lem:rpc_truthful}, the mechanism is EPIC-IR. By \Cref{lem:candidate_partitioning}, the mechanism is well-defined, and always outputs a feasible set of bidders. It remains to show that the mechanism gives a $(d+1)$-approximation. Consider the welfare-maximizing set of bidders, $I^\star$, which is the output of the greedy algorithm when iterating over the bidders according to their real values. By \Cref{lem:opt_candidates}, $I^\star\subseteq\cands$. 
    Let $\indep_{\cands}$ be a set of $k\le d+1$ independent sets forming a partition of $\cands$ (which exists by \Cref{lem:candidate_partitioning}). 
    For every $i\in I^\star$, let $\indep_\cands(i)$ denote the set in $\indep_\cands$ containing $i$. We have that the expected welfare is
    \begin{eqnarray*}
        \sum_{i=1}^n v_i(\sigs)\cdot\Pr[i\mbox{ is allocated}] & \ge & \sum_{i\in I^\star} v_i(\sigs)\cdot\Pr[i\mbox{ is allocated}] \\ 
        & = & \sum_{i\in I^\star} v_i(\sigs)\cdot\Pr[\indep_\cands(i)\mbox{ is served}] \\
        & = & \frac{1}{d+1} \sum_{i\in I^\star} v_i(\sigs) \\
        & = & \frac{OPT}{d+1},
    \end{eqnarray*}
    as desired. This concludes the proof of the theorem.
\end{proof}

\textbf{Remark.} We note that the \ref{alg:RCP} Mechanism can be applied to non-monotone critical valuations, by changing the definition of $i$'s low estimate for $j$'s value to be $\inf_{o_i\in S_i}v_j(o_i, \sigs_{-i})$.

\subsection{Bidders with Heterogeneous and Unknown $d$}
Similarly to~\citet{EdenGZ22,EdenFGMM23}, we can extend the results to the case where each bidder $i$'s valuation is $d_i$-critical with a different $d_i$ and $d_i$ is private information. We adjust the mechanism as follows.

\newcommand{\imax}[0]{\tilde{i}}

In addition to answering value queries, each bidder $i$ also reports $\hat{d}_i$. For every $i$, we compute $\bar{d}^i=\max_{j\neq i} \hat{d}_j$. Note that $(i)$ $\bar{d}^i$ only depends on the reports of bidders $j\neq i$; and $(ii)$ assuming bidders bid truthfully, all bidders face the same $\bar{d}$, except (potentially) one bidder --- the bidder $i$ with the maximum overall $d_{i}$ value, call this bidder $\imax$. We then find an allocation rule such that if a bidder $i\neq \imax$ becomes a candidate, they are served with probability $1/2(\bar{d}+1)$, and if $\imax$ is a candidate, $\imax$ is served with probability $1/2(\bar{d}^{\imax}+1)$. This ensures truthfulness as the probability a bidder is served depends on the reports of other bidders, and is monotone in $\hat{v}(\hat{\sigs})$ (since by \Cref{lemma:before_still_selected}, being a candidate is monotone in $\hat{v}(\hat{\sigs})$). 

It remains to present an allocation rule that satisfies the above conditions. This is done as follows.
We flip a fair coin. If it turns head, then if $\imax$ is a candidate, we  serve $\imax$ with probability $\frac{1}{\bar{d}^{\imax}+1}$. 
If it turns tails, then if $\imax \in \cands$, we remove $\imax$ from $\cands$. 
Since all bidders are $\bar{d}$-critical, we can apply the same argument as in~\Cref{lem:candidate_partitioning}, and partition $\cands$ into at most $\bar{d}+1$ independent sets, choosing each one with probability $\frac{1}{\bar{d}+1}$.
To show that this mechanism gives $2(\bar{d}+1)$-approximation, note that by \Cref{lem:opt_candidates}, every bidder in $I^\star$ is a candidate, and is served with probability at least $1/2(\bar{d}+1)$.

\bibliographystyle{plainnat}
\bibliography{interdependent.bib}

\begin{thebibliography}{31}
\providecommand{\natexlab}[1]{#1}
\providecommand{\url}[1]{\texttt{#1}}
\expandafter\ifx\csname urlstyle\endcsname\relax
  \providecommand{\doi}[1]{doi: #1}\else
  \providecommand{\doi}{doi: \begingroup \urlstyle{rm}\Url}\fi

\bibitem[Amer and Talgam-Cohen(2021)]{AmerTC21}
Ameer Amer and Inbal Talgam-Cohen.
\newblock {Auctions with Interdependence and SOS: Improved Approximation}.
\newblock In \emph{14th International Symposium on Algorithmic Game Theory
  (SAGT)}, 2021.

\bibitem[Ausubel et~al.(1999)]{ausubel1999generalized}
Lawrence~M Ausubel et~al.
\newblock A generalized vickrey auction.
\newblock \emph{Econometrica}, 1999.

\bibitem[Babaioff et~al.(2018)Babaioff, Immorlica, Kempe, and
  Kleinberg]{BabaioffIKK18}
Moshe Babaioff, Nicole Immorlica, David Kempe, and Robert Kleinberg.
\newblock Matroid secretary problems.
\newblock \emph{J. {ACM}}, 65\penalty0 (6):\penalty0 35:1--35:26, 2018.
\newblock \doi{10.1145/3212512}.
\newblock URL \url{https://doi.org/10.1145/3212512}.

\bibitem[Birmpas et~al.(2023)Birmpas, Ezra, Leonardi, and Russo]{BirmpasELR23}
Georgios Birmpas, Tomer Ezra, Stefano Leonardi, and Matteo Russo.
\newblock Fair division with interdependent values.
\newblock \emph{CoRR}, abs/2305.14096, 2023.
\newblock \doi{10.48550/ARXIV.2305.14096}.
\newblock URL \url{https://doi.org/10.48550/arXiv.2305.14096}.

\bibitem[Chawla et~al.(2014)Chawla, Fu, and Karlin]{ChawlaFK14}
Shuchi Chawla, Hu~Fu, and Anna Karlin.
\newblock Approximate revenue maximization in interdependent value settings.
\newblock In \emph{15th ACM Conference on Economics and Computation (EC)},
  2014.

\bibitem[Che et~al.(2015)Che, Kim, and Kojima]{CKK15}
Yeon-Koo Che, Jinwoo Kim, and Fuhito Kojima.
\newblock Efficient assignment with interdependent values.
\newblock \emph{Journal of Economic Theory}, 158:\penalty0 54--86, 2015.

\bibitem[Chen et~al.(2022)Chen, Eden, and Wang]{ChenEW}
Yiling Chen, Alon Eden, and Juntao Wang.
\newblock {Cursed yet Satisfied Agents}.
\newblock In \emph{13th Innovations in Theoretical Computer Science Conference
  (ITCS 2022)}, 2022.

\bibitem[Chung and Ely(2002)]{ChungE02}
Kim-Sau Chung and Jeffrey~C Ely.
\newblock Ex-post incentive compatible mechanism design.
\newblock 2002.
\newblock
  \url{http://citeseerx.ist.psu.edu/viewdoc/download?doi=10.1.1.121.8204&rep=rep1&type=pdf}.

\bibitem[Cohen et~al.(2023)Cohen, Feldman, Mohan, and
  Talgam{-}Cohen]{CohenFMT23}
Avi Cohen, Michal Feldman, Divyarthi Mohan, and Inbal Talgam{-}Cohen.
\newblock Interdependent public projects.
\newblock In \emph{34th {ACM-SIAM} Symposium on Discrete Algorithms (SODA)},
  2023.

\bibitem[Dasgupta and Maskin(2000)]{DM00}
Partha Dasgupta and Eric Maskin.
\newblock Efficient auctions.
\newblock \emph{The Quarterly Journal of Economics}, 2000.

\bibitem[Eden et~al.(2018)Eden, Feldman, Fiat, and Goldner]{EdenFFG18}
Alon Eden, Michal Feldman, Amos Fiat, and Kira Goldner.
\newblock Interdependent values without single-crossing.
\newblock In \emph{19th ACM Conference on Economics and Computation (EC)},
  2018.

\bibitem[Eden et~al.(2019)Eden, Feldman, Fiat, Goldner, and Karlin]{EdenFFGK19}
Alon Eden, Michal Feldman, Amos Fiat, Kira Goldner, and Anna~R. Karlin.
\newblock Combinatorial auctions with interdependent valuations: Sos to the
  rescue.
\newblock In \emph{20th ACM Conference on Economics and Computation (EC)},
  2019.

\bibitem[Eden et~al.(2021)Eden, Feldman, Talgam{-}Cohen, and Zviran]{EdenFTZ21}
Alon Eden, Michal Feldman, Inbal Talgam{-}Cohen, and Ori Zviran.
\newblock Poa of simple auctions with interdependent values.
\newblock In \emph{35th {AAAI} Conference on Artificial Intelligence (AAAI)},
  2021.

\bibitem[Eden et~al.(2022)Eden, Goldner, and Zheng]{EdenGZ22}
Alon Eden, Kira Goldner, and Shuran Zheng.
\newblock Private interdependent valuations.
\newblock In \emph{33rd {ACM-SIAM} Symposium on Discrete Algorithms (SODA)},
  2022.

\bibitem[Eden et~al.(2023)Eden, Feldman, Goldner, Mauras, and
  Mohan]{EdenFGMM23}
Alon Eden, Michal Feldman, Kira Goldner, Simon Mauras, and Divyarthi Mohan.
\newblock Constant approximation for private interdependent valuations.
\newblock In \emph{64th {IEEE} Annual Symposium on Foundations of Computer
  Science, {FOCS} 2023, Santa Cruz, CA, USA, November 6-9, 2023}, pages
  148--163. {IEEE}, 2023.
\newblock \doi{10.1109/FOCS57990.2023.00018}.
\newblock URL \url{https://doi.org/10.1109/FOCS57990.2023.00018}.

\bibitem[Edmonds(1965)]{edmonds1965minimum}
Jack Edmonds.
\newblock Minimum partition of a matroid into independent subsets.
\newblock \emph{J. Res. Nat. Bur. Standards Sect. B}, 69:\penalty0 67--72,
  1965.

\bibitem[Edmonds(1971)]{Edmonds71}
Jack Edmonds.
\newblock Matroids and the greedy algorithm.
\newblock \emph{Math. Program.}, 1\penalty0 (1):\penalty0 127--136, 1971.
\newblock \doi{10.1007/BF01584082}.
\newblock URL \url{https://doi.org/10.1007/BF01584082}.

\bibitem[for the Prize in Economic Sciences in Memory~of
  Alfred~Nobel(2020)]{nobel2021considerations}
The~Committee for the Prize in Economic Sciences in Memory~of Alfred~Nobel.
\newblock Scientifc background on the sveriges riksbank prize in economic
  sciences, 2020.
\newblock URL
  \url{https://www.nobelprize.org/uploads/2020/09/advanced-economicsciencesprize2020.pdf}.

\bibitem[Gkatzelis et~al.(2021)Gkatzelis, Patel, Pountourakis, and
  Schoepflin]{gkatzelis2021prior}
Vasilis Gkatzelis, Rishi Patel, Emmanouil Pountourakis, and Daniel Schoepflin.
\newblock Prior-free clock auctions for bidders with interdependent values.
\newblock In \emph{14th International Symposium on Algorithmic Game Theory
  (SAGT)}, 2021.

\bibitem[Ito and Parkes(2006)]{ItoP06}
Takayuki Ito and David~C. Parkes.
\newblock Instantiating the contingent bids model of truthful interdependent
  value auctions.
\newblock In \emph{Proceedings of the Fifth International Joint Conference on
  Autonomous Agents and Multiagent Systems}, page 1151–1158, 2006.

\bibitem[Jehiel and Moldovanu(2001)]{jm01}
Philippe Jehiel and Benny Moldovanu.
\newblock Efficient design with interdependent valuations.
\newblock \emph{Econometrica}, 2001.

\bibitem[Klemperer(1998)]{Klemperer98}
Paul Klemperer.
\newblock Auctions with almost common values: The 'wallet game' and its
  applications.
\newblock \emph{European Economic Review}, 42\penalty0 (3-5):\penalty0
  757--769, 1998.
\newblock URL
  \url{https://EconPapers.repec.org/RePEc:eee:eecrev:v:42:y:1998:i:3-5:p:757-769}.

\bibitem[Li(2013)]{Li13}
Yunan Li.
\newblock Approximation in mechanism design with interdependent values.
\newblock In \emph{Proceedings of the fourteenth {ACM} Conference on Electronic
  Commerce, {EC} 2013, Philadelphia, PA, USA, June 16-20, 2013}, pages
  675--676. {ACM}, 2013.
\newblock \doi{10.1145/2492002.2482580}.
\newblock URL \url{https://doi.org/10.1145/2492002.2482580}.

\bibitem[Lu et~al.(2022)Lu, Sun, and Zhou]{LuSZ22}
Pinyan Lu, Enze Sun, and Chenghan Zhou.
\newblock Better approximation for interdependent {SOS} valuations.
\newblock In \emph{18th International Conference on Web and Internet Economics
  (WINE)}, 2022.

\bibitem[Maskin(1992)]{maskin1992}
Eric Maskin.
\newblock Auctions and privatization.
\newblock \emph{Privatization, H. Siebert, ed. (Institut fur Weltwirtschaften
  der Universit\"{a}t Kiel: 1992)}, 1992.

\bibitem[Milgrom and Weber(1982)]{MilgromWeber82}
Paul~R Milgrom and Robert~J Weber.
\newblock A theory of auctions and competitive bidding.
\newblock \emph{Econometrica}, 1982.

\bibitem[Myerson(1981)]{myerson1981optimal}
Roger~B Myerson.
\newblock Optimal auction design.
\newblock \emph{Mathematics of operations research}, 6\penalty0 (1):\penalty0
  58--73, 1981.

\bibitem[Oxley(2011)]{Oxley2011}
James Oxley.
\newblock \emph{Matroid Theory (2nd ed.)}.
\newblock Oxford Univerity Press, 2011.

\bibitem[Robu et~al.(2013)Robu, Parkes, Ito, and Jennings]{RobuPIJ13}
Valentin Robu, David~C. Parkes, Takayuki Ito, and Nicholas~R. Jennings.
\newblock Efficient interdependent value combinatorial auctions with single
  minded bidders.
\newblock In \emph{{IJCAI} 2013, Proceedings of the 23rd International Joint
  Conference on Artificial Intelligence}, pages 339--345, 2013.

\bibitem[Roughgarden and Talgam-Cohen(2016)]{RoughgardenTC16}
Tim Roughgarden and Inbal Talgam-Cohen.
\newblock Optimal and robust mechanism design with interdependent values.
\newblock \emph{ACM Trans. Econ. Comput.}, 2016.

\bibitem[Wilson(1969)]{wilson1969communications}
Robert~B Wilson.
\newblock Communications to the editor—competitive bidding with disparate
  information.
\newblock \emph{Management science}, 1969.

\end{thebibliography}

\appendix

\end{document}